\documentclass{fundam}
\usepackage{url} 
\usepackage[ruled,lined,noend,noline,linesnumbered,nofillcomment]{algorithm2e}

\DontPrintSemicolon
\usepackage{graphicx}

\usepackage{amsfonts}
\usepackage{float}
\usepackage{framed}
\usepackage{caption}
\usepackage{comment}
\usepackage{booktabs} 
\usepackage{siunitx} 
\usepackage{multirow}
\usepackage{diagbox}

\usepackage{hyperref}

\begin{document}

\setcounter{page}{257}
\publyear{2021}
\papernumber{2073}
\volume{182}
\issue{3}

  \finalVersionForARXIV

\title{Efficient Algorithms for Maximum Induced Matching Problem in Permutation and Trapezoid Graphs}


\author{Viet Dung Nguyen, Ba Thai Pham, Phan Thuan Do\thanks{Address  for correspondence:
       Hanoi University of Science and Technology, 1 Dai Co Viet, Hai Ba Trung, Ha Noi, Viet Nam.\newline \newline
          \vspace*{-6mm}{\scriptsize{Received  July 2021; \ revised July 2021.}}}
       \\
Hanoi University of Science and Technology \\
1 Dai Co Viet, Hai Ba Trung, Ha Noi, Viet Nam\\
 dungnv@soict.hust.edu.vn,$\,$ thai.pb144038@sis.hust.edu.vn,$\,$
 thuandp@soict.hust.edu.vn
}

 \maketitle

\runninghead{V.D. Nguyen et al.}{Maximum Induced Matching in Permutation and Trapezoid Graphs}

\begin{abstract}
  We first design an $\mathcal{O}(n^2)$ solution for finding a maximum induced matching in permutation graphs given their permutation models, based on a dynamic programming algorithm with the aid of the sweep line technique. With the support of the disjoint-set data structure, we improve the complexity to $\mathcal{O}(m + n)$. Consequently, we extend this result to give an $\mathcal{O}(m + n)$ algorithm for the same problem in trapezoid graphs. By combining our algorithms with the current best graph identification algorithms, we can solve the MIM problem in permutation and trapezoid graphs in linear and $\mathcal{O}(n^2)$ time, respectively. Our results are far better than the best known $\mathcal{O}(mn)$ algorithm for the maximum induced matching problem in both graph classes, which was proposed by Habib et al.
\end{abstract}

\begin{keywords}
permutation graph, trapezoid graph, induced matching, sweep line, disjoint set
\end{keywords}

\section{Introduction} \label{S:introduction}

	The maximum matching problem is one of the most fundamental and applicable problems in graph theory. Given a graph $G = (V, E)$, a maximum matching is a subset $M \subseteq E$ of maximum size so that every two distinct edges in $M$ do not share a common vertex. Its applications can be found everywhere, from VLSI circuit design \cite{dagan1988trapezoid} to archaeology and chemistry \cite{golumbic2004algorithmic}. The best-known algorithm for the maximum matching problem in general graphs is $\mathcal{O}(\sqrt{V}E)$ \cite{micali1980v}. However, we can achieve better running time on many special graph classes thanks to their particular properties.
	
	If every two distinct edges in a matching $M$ are not connected by an edge in $G$, then $M$ is called an \textit{induced matching}. Recently, the maximum induced matching (MIM) problem has drawn enormous attention among researchers because of its importance in many fields such as artificial intelligence (cooperative path-finding problem \cite{surynek2014compact}, neural information processing \cite{kocaoglu2017experimental}), VLSI design \cite{golumbic2000new}, and marriage problems \cite{stockmeyer1982np}. The problem is proved to be NP-hard in general graphs \cite{stockmeyer1982np}. Some exponential time algorithms for the MIM problem in general graphs are proposed recently by Chang et al. \cite{chang2015moderately} and Xiao et al. \cite{XT17}. Besides, it is known for polynomial-time maximum induced matching on special graph classes such as co-comparability graphs (including circular-arc graphs \cite{golumbic1993irredundancy}, interval graphs \cite{cameron1989induced}, etc.), circular-convex bipartite and triad-convex bipartite graphs \cite{pandey2017induced}, AT-free graphs \cite{kohler1999graphs} and hexagonal graphs \cite{ervevs2016maximum}. In chordal graphs, finding a MIM can be done in linear time \cite{brandstadt2008maximum}.
	
	Dagan et al. \cite{dagan1988trapezoid} introduced trapezoid graphs in 1988. Given two parallel horizontal lines, a trapezoid is formed by two points on the upper line and two points on the lower line. A trapezoid graph is an intersection graph, \textit{i.e.}, a graph representing the pattern of intersections of a family of sets, built from such a set of trapezoids. A trapezoid representation (or trapezoid model) of a trapezoid graph includes two such horizontal lines and the set of trapezoids, which is used to forms that trapezoid graph. A permutation graph is a particular case of trapezoid graphs, in which the two intervals that define each trapezoid in the trapezoid representation shrink into two points only. Therefore, a permutation representation can be represented as a permutation of the first $n$ positive integers.
	
	Permutation graphs, as well as trapezoid graphs, are weakly chordal graphs \cite{cameron2003finding}. They are also Asteroidal-Triple-free (AT-free) graphs. In \cite{kohler1999graphs}, by applying the result from \cite{broersma1999independent}, MIM was solved in polynomial time for AT-free graphs. In \cite{chang2003induced}, the authors proposed a linear time algorithm for MIM on bipartite permutation graphs (which are bipartite AT-free graphs).
	
	Denote $n$ and $m$ as the number of vertices and edges in a graph, respectively. Do et al. \cite{do2017efficient} introduced an efficient algorithm to find a maximum matching in trapezoid graphs in $\mathcal{O}(n(\log n)^2)$. On the other hand, Rhee et al. \cite{rhee1995finding} proposed an $\mathcal{O}(n\log\log n)$ algorithm for such problems in permutation graphs. However, to the best of our knowledge, the MIM problem has not been mentioned with specific algorithms on permutation graphs. The best-known algorithm for finding a weighted induced matching for all co-comparability graphs, which are a superclass of permutation graphs and trapezoid graphs, has time complexity of $\mathcal{O}(mn)$ \cite{habib2020maximum}. So far, there is no other superclass of permutation graphs and trapezoid graphs on which the MIM problem is proved to be solvable in faster time than $\mathcal{O}(mn)$. Consequently, we can consider $\mathcal{O}(mn)$ the fastest time complexity to find a MIM on a permutation graph or a trapezoid graph.
	
	In this paper, we introduce more efficient algorithms for the MIM problem in both permutation graphs and trapezoid graphs. We first design in Section 2 an $\mathcal{O}(n^2)$ algorithm and then an $\mathcal{O}(m\log\log n + n)$ algorithm for finding a MIM in permutation graphs given their permutation models. These algorithms are based on a dynamic programming method with the aid of the sweep line technique on a geometry representation of permutation graphs. Our approach is to construct the longest chain of ordered edges, which form an induced matching. A sweep line moving from right to left correctly determines the order of dynamic processes. The edge set of the given permutation graph can be built from the vertex set of that graph in $\mathcal{O}(m + n)$ time by employing the benefits of the linked list data structure. Especially, with the support of the disjoint-set data structure, we improve the overall running time for finding a MIM to $\mathcal{O}(m + n)$ time in permutation graphs. Furthermore, in Section 3, we generalize this algorithm with the same running time $\mathcal{O}(m + n)$ to trapezoid graphs given trapezoid models. With the combination of our algorithms and the current best graph identification algorithms that generate presentation models for permutation graphs and trapezoid graphs, we can solve the MIM problem in permutation and trapezoid graphs in linear and $\mathcal{O}(n^2)$ time, respectively. Our results are far better than the best known $\mathcal{O}(mn)$ algorithm \cite{habib2020maximum} in both graph classes. This paper is the complete version including preliminary results of our conference papers \cite{nguyen2018dynamic} and \cite{nguyen2019quadratic}.

\section{Fundamental definitions} \label{S:Definitions}

\subsection{Maximum induced matching}

    A subset $M \subseteq E$ is an induced matching of graph $G = (V, E)$ if for every two distinct edges $e_1 = u_1v_1$ and $e_2 = u_2v_2$ in $M$ we have $u_1u_2 \notin E$, $u_1v_2 \notin E$, $v_1u_2 \notin E$ and $v_1v_2 \notin E$. We denote $L(G)$ as the line graph of $G$, \textit{i.e.}, each vertex of $L(G)$ represents an edge of $G$. Two vertices in $L(G)$ are adjacent if and only if their corresponding edges share a common endpoint in $G$. We also denote $G^2$ as the graph having the same vertex set as $G$ and two vertices are adjacent if their distance, \textit{i.e.}, the number of edges in a shortest path connecting them, is at most 2 in $G$. Then, the problem to find an induced matching of maximum cardinality is exactly the maximum independent set problem on $L(G)^2$.

\subsection{Permutation graph}

	Let $\pi = (\pi(1), \pi(2), ..., \pi(n))$ be a permutation of the first $n$ positive integers. We build an undirected graph $G(\pi) = (V, E)$ in which the vertex set $V = \{1, 2, ..., n\}$ and an edge $uv \in E$ if and only if $(u - v)(\pi^{-1}(u) - \pi^{-1}(v)) < 0$, where $\pi^{-1}(i)$ is the position of $i$ in $\pi$. An undirected graph $G$ is called a \textit{permutation graph} if there is a permutation $\pi$ such that $G(\pi)$ is isomorphic to $G$ (see Figure \ref{F:permutation graph example} for an example), and $\pi$ is called a \textit{permutation representation} (or \textit{permutation model}) of $G$.

\begin{figure}[!ht]
\centering
\includegraphics[width=13cm]{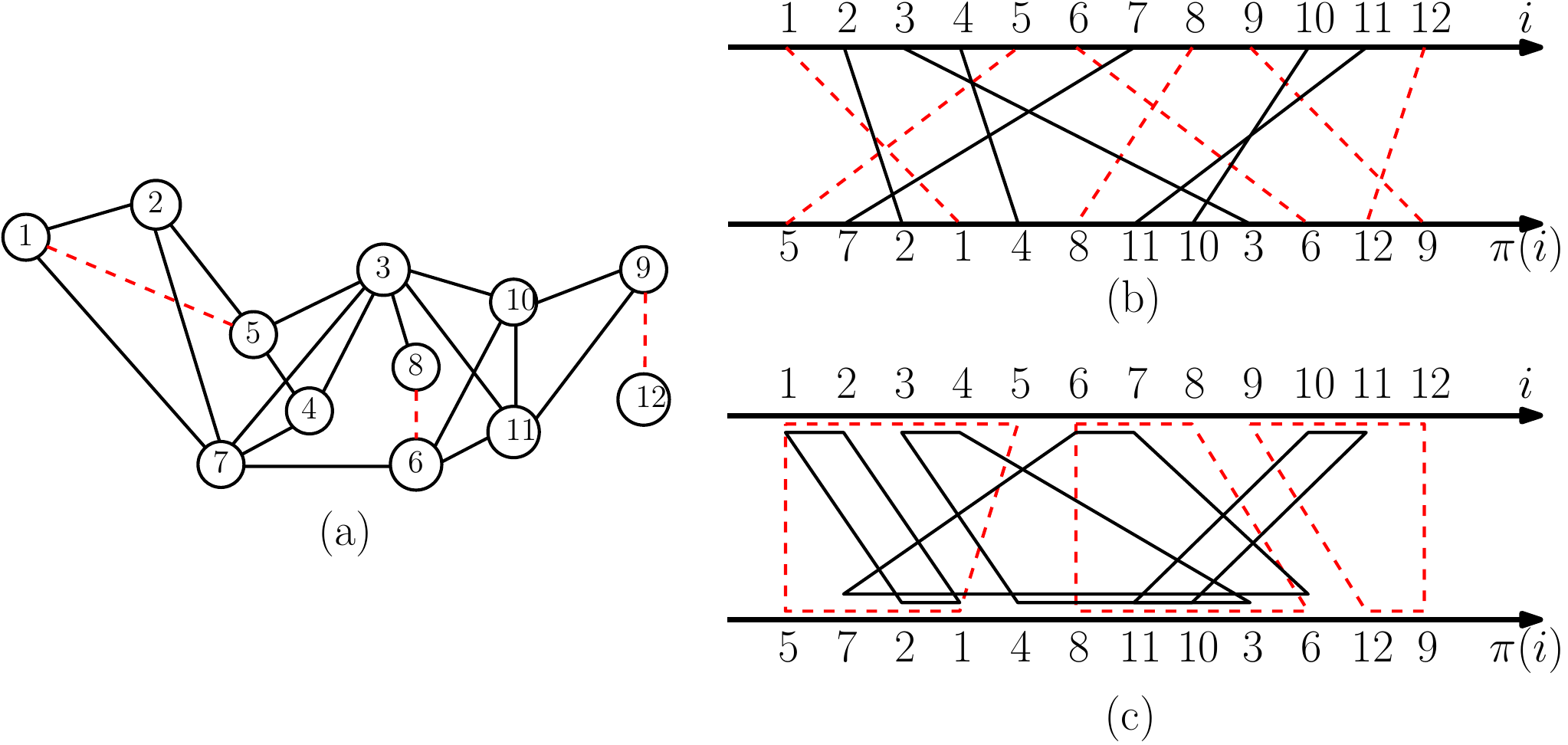}\vspace*{-1mm}
\caption{\rm A permutation graph $G$ (a) and one of its corresponding permutation representation (b), which has $\pi = (5, 7, 2, 1, 4, 8, 11, 10, 3, 6, 12, 9)$. A trapezoid model of the trapezoid graph $L(G)^2$, partially shown in (c), can be constructed from the permutation representation of $G$. A maximum induced matching for $G$ is $M = \{(1, 5), (6, 8), (9, 12)\}$, also seen as a maximum independent set of $L(G)^2$.}
\label{F:permutation graph example}
\end{figure}

\begin{figure}[!h]
\vspace*{-2mm}
\centering
\includegraphics[width=12.9cm]{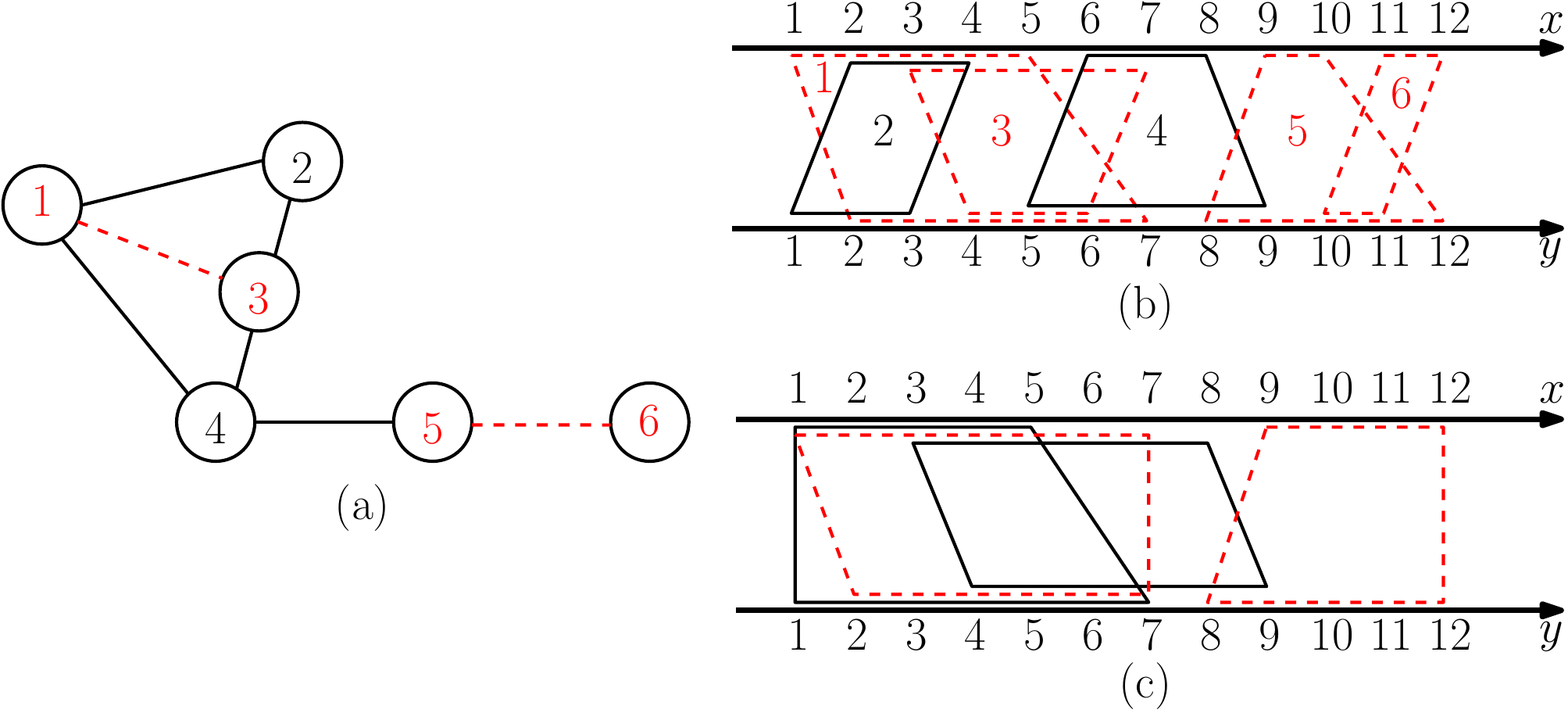}\vspace*{-1mm}
\caption{\rm A trapezoid graph $G$ (a) and one of its corresponding trapezoid representations (b). A trapezoid model of $L(G)^2$, partially shown in (c), can be constructed from the trapezoid representation of $G$. A maximum induced matching for $G$ is $M = \{(1, 3), (5, 6)\}$, also seen as a maximum independent set of $L(G)^2$.}
\label{F:trapezoid graph example}\vspace*{-2mm}
\end{figure}

\subsection{Trapezoid graph}

	We assume that a trapezoid model is given by a set of trapezoids $\tau$. Given two parallel horizontal axes: $x$-axis and $y$-axis, each trapezoid $A \in \tau$ is given by four values $x_1, x_2, y_1, y_2$ ($x_1 \le x_2$, $y_1 \le y_2$) such that [$x_1, x_2$] and [$y_1, y_2$] are the two intervals on these two axes that form the trapezoid $A$. The segment connecting the point $x_1$ on the $x$-axis and the point $y_2$ on the $y$-axis is called a \textit{diagonal} of $A$. Similarly, the segment connecting the point $x_2$ on the $x$-axis and the point $y_1$ on the $y$-axis is another diagonal of $A$. A trapezoid has at most two diagonals and at least one diagonal (when $x_1 = x_2$ and $y_1 = y_2$). We assume that all corners of each trapezoid in the model have mutually different $x$- and $y$-coordinates. Otherwise, we may obtain this property by perturbing the corner points without changing the relationship between trapezoids \cite{felsner1997trapezoid}. Therefore, we can assume that all $x$- and $y$-coordinates are integers in the interval $[1, 2n]$, where $n = |\tau|$ is the number of trapezoids by mapping each coordinate to an integer in this interval and keeping their cardinality order (see Figure \ref{F:trapezoid box representation} for an example).
	
	We construct an undirected graph $G(\tau) = (V, E)$ by setting the vertex set $V = \{1, 2, ..., |\tau|\}$ and labeling each trapezoid in $\tau$ as a distinct number from $1$ to $|\tau|$. Then, $AB$ is an edge of $G(\tau)$ if and only if trapezoid $A \in \tau$ and trapezoid $B \in \tau$ intersect (i.e. when a diagonal of $A$ intersects with a diagonal of $B$). An undirected graph $G$ is a trapezoid graph if there exists a set $\tau$ such that $G$ is isomorphic to $G(\tau)$.

\section{Maximum induced matching in permutation graphs} \label{S:MIM in permutation graphs}

\subsection{An $\mathcal{O}(n^2)$ maximum induced matching algorithm in permutation graphs} \label{SS:O(n^2) MIM in permutation graphs}

	Permutation $\pi^{-1} = (\pi^{-1}(1), \pi^{-1}(2), ..., \pi^{-1}(n))$ can be represented as points on 2-dimensional \texttt{SPACE}$(\pi^{-1})$ with horizontal axis $i$ and vertical axis $\pi^{-1}$. Each element $\pi^{-1}(i)$ corresponds to the point $(i, \pi^{-1}(i))$ on \texttt{SPACE}$(\pi^{-1})$. An edge $xy$ of $G(\pi)$ (or equivalently, vertex $xy$ of $L(G)^2$) is described as a rectangle whose sides are parallel to the axes and having two opposite corners $(x, \pi^{-1}(x))$ and $(y, \pi^{-1}(y))$ (see Figure \ref{F:permutation box representation} for an example). The problem could be viewed from a different angle as finding a longest sequence of disjoint rectangles such that the next rectangle is completely at the top-right of the previous rectangle in the sequence, since such a sequence corresponds to a maximum independent set in $L(G)^2$ and vice versa. This geometric representation reveals special benefits based on the \textit{sweep line technique}This geometric representation reveals special benefits based on the sweep line technique that plays an essential role in our algorithms. A sweep line moving from right to left on \texttt{SPACE}$(\pi^{-1})$ determines the order of dynamic programming processes, which helps to find a MIM correctly and efficiently.

\begin{figure}[H]
\centering
\includegraphics[width=14cm]{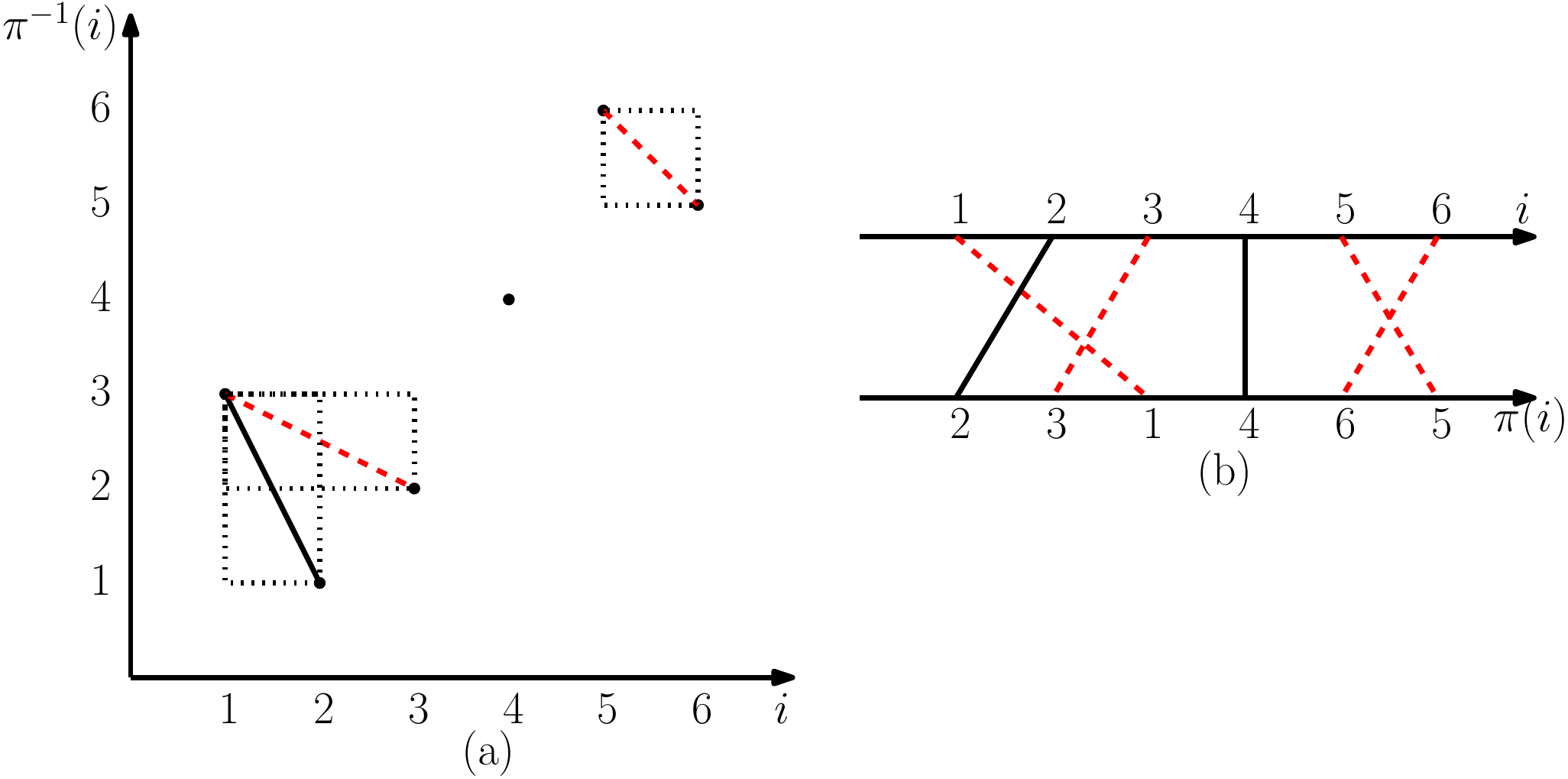}
\caption{\rm The presentation of permutation $\pi^{-1} = (3, 1, 2, 4, 6, 5)$ on \texttt{SPACE}$(\pi^{-1})$ (a) and the corresponding diagram of permutation $\pi$ (b). There are three matches: (1, 2), (1, 3) and (5, 6). Match (5, 6) is greater than the other two matches. A longest chain of length 2 is (1, 3), (5, 6).}
\label{F:permutation box representation}
\end{figure}

	We first briefly describe an algorithm that finds a MIM on $G(\pi)$ in $\mathcal{O}(n^2)$ time. Let $G(\pi) = (V, E)$ where $\pi$ is a permutation of length $n$ and $E$ is the edge set of size $m$. We show some following definitions.

\begin{definition}
	An ordered pair $(x, y)$ is called a \textit{match} if $1 \leq x < y \leq n$ and $\pi^{-1}(x) > \pi^{-1}(y)$. For a match $(x, y)$, $x$ is called the \textit{left end} and $y$ is called the \textit{right end} of the match.
\end{definition}

	One can see that each match $(x, y)$ corresponds to the edge $xy$ of $G(\pi)$, so the number of distinct matches in \texttt{SPACE}$(\pi^{-1})$ is $m$. Denote by $E'$ the set of all matches in \texttt{SPACE}$(\pi^{-1})$. In our algorithms, the term \textit{match} is used instead of \textit{edge}.  The utilization of \textit{match} in place of \textit{edge} is necessary to help our sweep line correctly determine the order of the dynamic programming process lately described.

\begin{definition}
	Given two matches $e = (x, y)$ and $e' = (x', y')$, we define $e < e'$ if $y < x'$ and $\pi^{-1}(x) < \pi^{-1}(y')$. We say that $e$ is \textit{smaller} than $e'$ and $e'$ is \textit{greater} than $e$.
\end{definition}

\begin{definition}
	A sequence of matches $e_1, e_2, ..., e_k$ is called a \textit{chain} if $e_i < e_{i+1}$ for all $1 \leq i < k$. The \textit{length} of the chain is $k$, and the match $e_1$ is the \textit{smallest match} of the chain.
\end{definition}

	Since a chain corresponds to a maximum independent set in $L(G)^2$ and vice versa, the MIM problem in permutation graphs turns out to be finding the longest chain on \texttt{SPACE}$(\pi^{-1})$. It could be solved by computing a function $f: E' \rightarrow \mathbb{N}$, where $f(e)$ is the length of the longest chain having $e$ as the smallest match. One can see that the maximum value of $f(e)$ among all matches $e$ is the size of a MIM on $G(\pi)$, which is what we need. We also compute a function $link: E' \rightarrow E'$, where $link(e)$ is a match such that $f(e) = f(link(e)) + 1$ and the match $e$ is smaller than $link(e)$. This $link$ function is utilized to construct a MIM in the end.

	\smallskip
	The whole algorithm can be summarized as follow:

\begin{framed}
	\textbf{Steps to find a MIM in $G(\pi)$:}

	1. Construct all matches that exist on \texttt{SPACE}$(\pi^{-1})$ and store them in adjacent lists.
	
	2. Calculate two functions $f$ and $link$ for all matches $e$.
	
	3. Build a MIM based on functions $f$ and $link$ calculated in step 2.
\end{framed}

\subsubsection{Construct all matches that exist on \texttt{SPACE}$(\pi^{-1})$} \label{SSS:permutation step construct}
	
	Throughout this paper, we denote $A.x$ as the value $x$ of an object $A$.
	
	In the first step, we construct all matches that exist on \texttt{SPACE}$(\pi^{-1})$. Let $Match(x)$, where $1 \leq x \leq n$, be the list of all $y\ (1 \leq y < x)$ having $\pi^{-1}(y) > \pi^{-1}(x)$. The set $\{(y, x)\ |\ y \in Match(x)\}$ is the set of all matches with the right end $x$. These sets are pairwise disjoint for different values of $x$ and the union of them is all the matches that exist on \texttt{SPACE}$(\pi^{-1})$. $Match$ lists can be considered as adjacent lists of the permutation graph, except that each edge is stored in only one list.
	
	We use the linked list data structure to construct all matches that exist on $\mathtt{SPACE}(\pi^{-1})$ in $\mathcal{O}(m + n)$ time (see Procedure \ref{Algo:buildAllMatches-Permutation} below). We create a linked list \texttt{LL} of $n$ nodes numbered from 1 to $n$, where node $n$ is the head of \texttt{LL}, and the next node of node $i$ is node $(i - 1)$ for all $1 < i \le n$. Initially, we set $Match(x) = \emptyset$ for all $x$. We make a loop from $a = n$ to $a = 1$. For each $a$, we start the visiting process from the head of \texttt{LL}. When a node $p$ is visited, if $p > \pi(a)$, then we add $\pi(a)$ to $Match(p)$ and move to the next node, else we remove $p$ from \texttt{LL}.
	
	For readability and simplicity, we only show a brief version of Procedure \ref{Algo:buildAllMatches-Permutation} here. For the complete version of Procedure \ref{Algo:buildAllMatches-Permutation}, please see \textit{Appendix A}.
	
	\begin{algorithm}[h]\small
\caption{$\mathtt{buildAllMatches}(\pi)$}
\label{Algo:buildAllMatches-Permutation}

\SetKwInOut{Description}{Description}
\Description{\textit{Step 1:} Construct all matches that exist on \texttt{SPACE}$(\pi^{-1})$ and store them as adjacent lists.}

\ShowLn \tcc{initialize}
Create linked list \texttt{LL} of $n$ nodes numbered from 1 to $n$, where $\mathtt{LL}.head = n$, and $i.next = i - 1$ for all $1 < i \le n$\;
$Match(x) \leftarrow \emptyset$ for all $1 \leq x \leq n$\;
\;
\ShowLn \tcc{build $Match$ lists}
\For{$a \leftarrow n$ down to $1$}{
	$p \leftarrow \mathtt{LL}.head$\;
	\While{$p > \pi(a)$}{
		add $\pi(a)$ to $Match(p)$\;
		$p \leftarrow p.next$\;
	}
	remove node $p$ from $\mathtt{LL}$\;
}
\;
\Return {$\{Match(x)\ |\ 1 \leq x \leq n\}$}\;

\end{algorithm}

\begin{lemma} \label{LM:buildAllMatches correctness}
	Procedure \texttt{buildAllMatches} correctly constructs the list $Match(x)$ for all $1 \le x \le n$.
\end{lemma}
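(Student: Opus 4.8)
The plan is to prove correctness by establishing a loop invariant for the main \texttt{for} loop over $a$ from $n$ down to $1$. The key observation is that we must track two things simultaneously: (i) which nodes remain in the linked list \texttt{LL} at the start of each iteration, and (ii) which matches have been added to the $Match$ lists so far. I would formulate the invariant as follows: just before the iteration for a given value of $a$, the linked list \texttt{LL} contains exactly the nodes $p$ with $p > \pi(a')$ for no $a' > a$ having $\pi(a') \ge p$ — more precisely, \texttt{LL} contains exactly those values $p \in \{1,\dots,n\}$ such that $p \notin \{\pi(n), \pi(n-1), \dots, \pi(a+1)\}$, i.e. the values not yet "consumed"; and for each such $p$, the list $Match(p)$ already contains exactly those $\pi(a')$ with $a' > a$ and $\pi(a') < p$.

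First I would verify the base case: before the iteration $a = n$, \texttt{LL} is the full list $\{n, n-1, \dots, 1\}$ and every $Match(p)$ is empty, which matches the invariant since the set $\{\pi(n),\dots,\pi(a+1)\}$ is empty. Next, for the inductive step, I would examine one iteration. The \texttt{while} loop walks from $\mathtt{LL}.head$ through consecutive nodes $p$ as long as $p > \pi(a)$, appending $\pi(a)$ to each such $Match(p)$, and then removes the first node $p$ with $p \le \pi(a)$. Here I need the crucial structural fact that \emph{the node $\pi(a)$ itself is still in \texttt{LL} at the start of iteration $a$} (since $\pi(a) \notin \{\pi(n),\dots,\pi(a+1)\}$ as $\pi$ is a permutation), and that, because \texttt{LL} is maintained in strictly decreasing order, the nodes visited before reaching $\pi(a)$ are exactly the remaining values strictly greater than $\pi(a)$. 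Hence the \texttt{while} loop adds $\pi(a)$ to $Match(p)$ for exactly those surviving $p > \pi(a)$, and the node removed is precisely $\pi(a)$ — so after the iteration \texttt{LL} has lost exactly the value $\pi(a)$ and the $Match$ lists have been updated correctly, re-establishing the invariant for $a-1$.

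Finally I would combine this with the definition of a match: $(\pi(a), p)$ is a match with right end $p$ exactly when $\pi(a) < p$ and $\pi^{-1}(\pi(a)) = a > \pi^{-1}(p)$, i.e. $p = \pi(a')$ for some $a' < a$. Reading the invariant at termination (after the iteration $a=1$ completes, equivalently summing the per-iteration contributions), $Match(p)$ has accumulated exactly the left ends of all matches with right end $p$, as claimed. The main obstacle I anticipate is the bookkeeping needed to justify that \texttt{LL} stays sorted in decreasing order and that the traversal pointer $p$ never "skips" a relevant node — in other words, carefully arguing that removing exactly one node per iteration (the node labelled $\pi(a)$) preserves the decreasing order and that no match is missed or duplicated. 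The ordering claim itself is an easy secondary induction (removing an element from a sorted linked list keeps it sorted), but it must be stated explicitly because the correctness of the \texttt{while}-loop's stopping condition hinges on it.
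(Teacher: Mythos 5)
Your proof is correct and rests on the same key observations as the paper's own argument: that \texttt{LL} stays sorted in decreasing order, that the while-loop therefore stops exactly at the node $\pi(a)$, and that the nodes removed before iteration $a$ are precisely $\pi(n),\dots,\pi(a+1)$. The only difference is presentational — you package soundness and completeness into a single loop invariant, whereas the paper argues the two directions separately (validity of every added element, then a contradiction argument showing no match is missed) — so this is essentially the same proof.
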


\begin{proof}
Because the nodes in the linked list \texttt{LL} are sorted in decreasing order from head to tail and there are $n$ distinct nodes at the beginning, the while-loop on line 8 always terminates when $p = \pi(a)$. Therefore, when $a = i$ for some $1 \le i \le n$, only the nodes $\pi(n)$, $\pi(n - 1)$, ..., $\pi(i + 1)$ are removed from \texttt{LL}. It means that when $a = i$, we have $\pi^{-1}(\pi(a)) = a = i > \pi^{-1}(p)$ for all $p > \pi(a)$. Consequently, all the elements added to the list $Match(p)$ on line 9 are valid.

On the other hand, suppose that when $a = i$ for some $1 \le i \le n$, there is some removed node $q$ where $(\pi(a), q)$ is a match, we will show a contradiction. Indeed, if $q$ is removed before, $q$ must belong to the set $\{\pi(n), \pi(n - 1), ..., \pi(i + 1)\}$. It leads to the fact that $\pi^{-1}(q) \ge i + 1 > i = a = \pi^{-1}(\pi(a))$, so $(\pi(a), q)$ is not a match, contradicts with the assumption. Hence, when $a = i$, all the nodes $p$ where $(\pi(a), p)$ is a match still remain in the linked list \texttt{LL}. The while-loop on line 8 can iterate through all such nodes since $p > \pi(a)$ when $(\pi(a), p)$ is a match as definition.

Based on these two conclusions, for all $1 \le x \le n$, Procedure \textit{buildAllMatches} correctly constructs the list $Match(x)$.
\end{proof}

\begin{lemma} \label{LM:buildAllMatches running time}
	Procedure \texttt{buildAllMatches} takes $\mathcal{O}(m + n)$ time.
\end{lemma}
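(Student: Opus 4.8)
The plan is to bound the total work by an amortized argument on the linked-list operations. First I would observe that the outer \texttt{for}-loop runs exactly $n$ times, so the initialization of \texttt{LL}, the assignment $p \leftarrow \mathtt{LL}.head$, and the single removal of node $p$ at the end of each iteration together contribute $\mathcal{O}(n)$ time. The nontrivial part is to account for the iterations of the inner \texttt{while}-loop on line 8.

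Next I would split the inner-loop iterations into two kinds according to which body executes. Each iteration in which the condition $p > \pi(a)$ holds performs one insertion into $Match(p)$ and one pointer advance $p \leftarrow p.next$; by the correctness analysis in Lemma \ref{LM:buildAllMatches correctness}, such an iteration occurs exactly when $(\pi(a), p)$ is a match, and every match is produced in this way exactly once (since a match $(y,x)$ has a unique right end $x = \pi(a)$). Hence the number of such iterations over the whole run is exactly $|E'| = m$. The one remaining iteration per value of $a$ is the terminating one, where $p = \pi(a)$ and node $p$ is removed from \texttt{LL}; there are $n$ of these in total, already covered by the $\mathcal{O}(n)$ bound above. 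Each iteration, of either kind, costs $\mathcal{O}(1)$ because linked-list head access, pointer following, node removal, and appending to the tail of a list (assuming $Match(p)$ is maintained with a tail pointer, as in the full version in Appendix A) are all constant-time.

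Summing up, the total running time is $\mathcal{O}(n)$ for initialization and the per-$a$ overhead, plus $\mathcal{O}(m)$ for the match-producing inner iterations, giving $\mathcal{O}(m+n)$ overall. The main thing to be careful about — and the only place the argument could slip — is the claim that each inner iteration is genuinely $\mathcal{O}(1)$; this relies on appending to $Match(p)$ in constant time, which is why the implementation keeps each $Match$ list as a linked list with direct access to its last element rather than searching to its end. Everything else is a routine charging of the inner-loop iterations either to the $m$ distinct matches or to the $n$ values of the outer loop variable.
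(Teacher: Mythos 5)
Your proof is correct and follows essentially the same approach as the paper: both charge the cost to the initialization of \texttt{LL} (giving the $\mathcal{O}(n)$ term) plus the number of additions to the $Match$ lists, each of which corresponds to a distinct edge (giving the $\mathcal{O}(m)$ term). Your version merely spells out the charging of inner-loop iterations and the constant-time list-append assumption more explicitly than the paper does.
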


\begin{proof}
	The time complexity of procedure \texttt{buildAllMatches} is the \texttt{LL}'s building time plus the number of times we add a new element to a $Match$ list. Since each $Match$ list does not have duplicate elements and each match corresponds to a unique edge in $G(\pi)$, this procedure takes $\mathcal{O}(m + n)$ time.
\end{proof}

\subsubsection{Calculate two functions $f$ and $link$} \label{SSS:permutation step calculate}

	In the second step, we calculate two functions $f$ and $link$. We make a sweep line $L$ moving from right to left and visit every coordinate $i = x\ (1 \leq x \leq n)$. On $L$, we maintain $n$ memory units called \textit{cells}. Each time when $L$ stays at $i = x$, each cell $L_y$ ($1 \leq y \leq n$) is a pair $(len, trace)$, where $len$ is the length of a longest chain where the smallest match is any match $trace = (z, t)$ satisfying $z > x$ and $\pi^{-1}(t) = y$. An algorithm to calculate the function $f$ can be built with the aid of the sweep line $L$. First of all, let $S_x$ be a list, which stores pairs of match $e$ and its corresponding $f(e)$, for all matches $e$ having $x$ as its left end. The $S_x$ lists are needed later for the cell updating process in our dynamic algorithm. Initially, we set $S_x = \emptyset$, also set all $L_y$ to be $(0, \mathtt{NULL})$. Then, we start to move the sweep line $L$. When reaching the coordinate $i = x$, we calculate $f(e)$ and $link(e)$ for all matches $e = (y, x)$ where $y \in Match(x)$ by the following formulae:

\begin{framed}
	$\bullet\ f(e) = 1 + \underset{j > \pi^{-1}(y)}{\max} L_j.len$
	
	$\bullet\ link(e) = \underset{j > \pi^{-1}(y), f(e) = L_j.len + 1\ (*)}{L_j.trace}$ (for an arbitrary $j$ satisfies (*))
\end{framed}
	
	After having $f(e)$, we add $(f(e), e)$ into the list $S_y$. After the calculations of $f(e)$ and $link(e)$ for all $e$ having $x$ as their right end, we start to update cells on $L$. This process is done by going through all elements in the list $S_x$: for each element $(f(e), e)$ where $e = (x, a)$, if $L_{\pi^{-1}(a)}.len < f(e)$, we set $L_{\pi^{-1}(a)} = (f(e), e)$ (see Figure \ref{F:O(n^2) permutation G(pi)}b for an example). In the end, when all $x$ are swept by $L$, we will have the answer for the MIM problem by looking up the functions $f$ and $link$.
	
	A naive algorithm based on this method takes $\mathcal{O}(mn)$ time to run. Indeed, to calculate $f(e)$ and $link(e)$ for each $e$, a single loop that runs in $\mathcal{O}(n)$ is required. As the number of matches is $m$, it takes $\mathcal{O}(mn)$ overall. The main problem which causes the algorithm slow is the requirement of an $\mathcal{O}(n)$ loop to calculate each match.
	
	However, it is noticeable that the elements in each list $Match(x)$ are arranged in decreasing order of the function $\pi^{-1}$ applying to them, \textit{i.e.}, element $a$ is added before element $b$ in $Match(x)$ if $\pi^{-1}(a) > \pi^{-1}(b)$. The following lemma will prove this argument.

\begin{lemma} \label{LM:Match(x) elements order}
	The elements of each list $Match(x)$ are arranged in decreasing order of the function $\pi^{-1}$ applying to them.
\end{lemma}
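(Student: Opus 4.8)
The plan is to analyze the order in which elements are appended to a fixed list $Match(x)$ during the execution of Procedure \texttt{buildAllMatches}, and show this order coincides with the decreasing order of $\pi^{-1}$. Recall that elements are added to $Match(x)$ only during iterations of the outer for-loop with $a = n, n-1, \ldots$, and specifically, on line~9, the value $\pi(a)$ is added to $Match(p)$ for each node $p > \pi(a)$ currently reached by the while-loop. Hence an element $y$ is added to $Match(x)$ precisely when, during the iteration $a = \pi^{-1}(y)$, the node $x$ is still present in \texttt{LL} and is reached by the while-loop — which, by Lemma~\ref{LM:buildAllMatches correctness}, happens exactly when $(y, x)$ is a match, i.e. $y < x$ and $\pi^{-1}(y) > \pi^{-1}(x)$.

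The key observation is that the iterations of the outer for-loop are executed in the order $a = n, n-1, \ldots, 1$, i.e. in \emph{decreasing} order of $a$. Now suppose $y_1$ and $y_2$ are two elements of $Match(x)$ with $y_1$ added strictly before $y_2$. Since $y_1$ is appended during iteration $a = \pi^{-1}(y_1)$ and $y_2$ during iteration $a = \pi^{-1}(y_2)$, and iterations run in decreasing order of $a$, being added earlier forces $\pi^{-1}(y_1) > \pi^{-1}(y_2)$. (The two values of $\pi^{-1}$ are distinct because $\pi^{-1}$ is a bijection, so there is no tie to worry about.) Therefore the elements of $Match(x)$, read in insertion order, have strictly decreasing $\pi^{-1}$-values, which is exactly the claim.

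I would present the argument in exactly these steps: (i) characterize when $y$ is appended to $Match(x)$ — namely, during iteration $a = \pi^{-1}(y)$, invoking Lemma~\ref{LM:buildAllMatches correctness} to justify that the node $x$ is indeed still in \texttt{LL} and reached by the while-loop at that time; (ii) note that outer iterations are processed for $a = n$ down to $1$; (iii) conclude that earlier insertion into $Match(x)$ implies larger $a = \pi^{-1}(\cdot)$, hence decreasing order. There is no serious obstacle here — the statement is essentially a direct reading of the loop structure — but the one point that needs care is step~(i): one must be sure that each $y$ with $(y,x)$ a match is added to $Match(x)$ during precisely one iteration (the iteration $a = \pi^{-1}(y)$, when $y = \pi(a)$), and that no spurious later addition occurs; this is already guaranteed by Lemma~\ref{LM:buildAllMatches correctness}, so the proof amounts to citing it and then observing the monotonicity of the loop counter.
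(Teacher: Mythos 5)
Your proof is correct and follows essentially the same route as the paper: the paper likewise observes that an element added to $Match(x)$ during iteration $a$ equals $\pi(a)$, so earlier insertion means a larger loop counter $a = \pi^{-1}(\cdot)$, giving the decreasing order. Your additional appeal to Lemma~\ref{LM:buildAllMatches correctness} to pin down exactly when each element is inserted is a harmless elaboration of the same argument.
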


\begin{proof}
	Consider any pair $(p, q)$ of elements in $Match(x)$, suppose that $p = \pi(i)$ is added before $q = \pi(j)$, we can see that $i > j$ since the for-loop iterator $a$ in \texttt{buildAllMatches} decreases. Therefore, $\pi^{-1}(p) = \pi^{-1}(\pi(i)) = i > j = \pi^{-1}(\pi(j)) = \pi^{-1}(q)$, the lemma is proven.
\end{proof}
	
	By Lemma \ref{LM:Match(x) elements order}, we can reduce running time by using just $\mathcal{O}(n)$ operations to calculate $f(e)$ and $link(e)$ for all $e$ with the right end $x$. Hence, we have an $\mathcal{O}(n^2)$ algorithm overall. We maintain a decreasing-pointer $z$ and two variables $maxLen$ and $trace$ where $maxLen$ is $\underset{i > z}{\max}(L_i.len)$ and $trace$ is the corresponding $L_i.trace$ when $L_i.len$ reaches maximum. These two variables $maxLen$ and $trace$ are updated each time we decrease $z$ (see Figure \ref{F:O(n^2) permutation G(pi)}a for an example). The following procedure \texttt{calculateFAndLink} will represent step 2. For readability and simplicity, we only show a brief version of Procedure \ref{Algo:calculateFAndLink-Permutation} here. For the complete version of Procedure \ref{Algo:calculateFAndLink-Permutation}, please see \textit{Appendix B}. \medskip

\begin{algorithm}[H]\small
\caption{$\mathtt{calculateFAndLink}(\pi^{-1}, Match)$}
\label{Algo:calculateFAndLink-Permutation}

\SetKwInOut{Description}{Description}
\Description{\textit{Step 2:} Calculate two functions $f$ and $link$ for all matches $e$.}

\ShowLn \tcc{initialize}	
initialize $S$ and $L$\;
\;
\ShowLn \tcc{calculate functions $f$ and $link$}
\For{$x \leftarrow n$ down to $1$}{
	calculate $f$ and $link$ for all matches $(*, x)$ with pointer $z$ decreasing from $n$\;
	update elements in $S_x$ to $L$\;
}
\;
\Return {$(f, link)$}\;

\end{algorithm}

\begin{figure}[H]
\vspace{2mm}
\centering
\includegraphics[width=13cm]{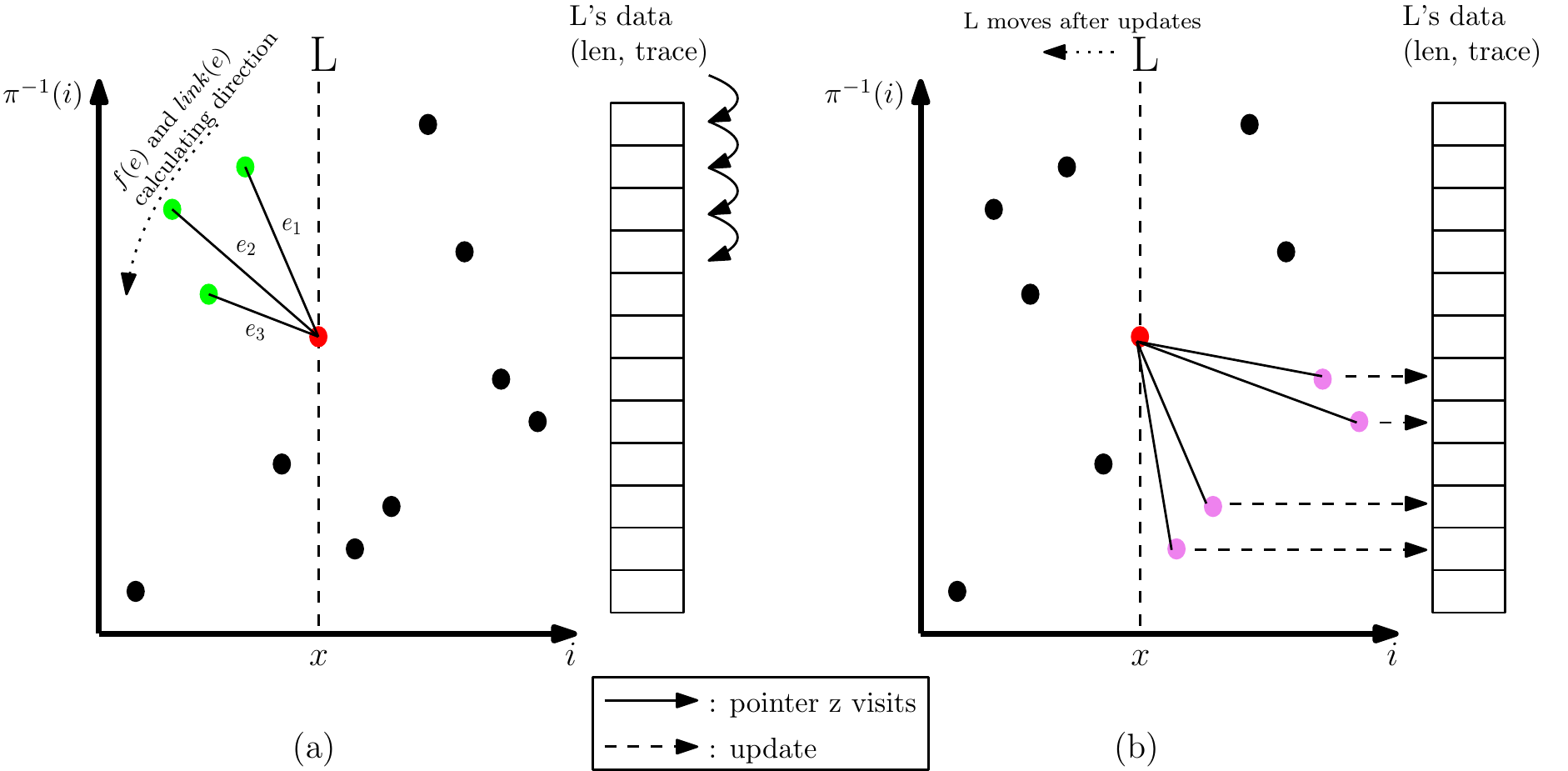}
\caption{Calculate $f$ and $link$ in $G(\pi)$ in $\mathcal{O}(n^2)$ time.\\
(a) When $L$ stays at $i = x$, for all matches $e = (y, x)$, $f(e)$ and $link(e)$ are calculated in decreasing order of $\pi^{-1}(y)$. To obtain $f(e)$ and $link(e)$, every $L_z$ such that $\pi^{-1}(y) < z \leq n$ is visited. Then $(f(e), e)$ is added to the list $S_y$.\\
(b) Each element $(f(e), e)$ having $e = (x, y)$ in $S_x$ is updated to the cell $L_{\pi^{-1}(y)}$. After all, the sweep line $L$ moves to the coordinate $i = x - 1$.}
\label{F:O(n^2) permutation G(pi)}
\end{figure}

\subsubsection{Build a maximum induced matching} \label{SSS:permutation step build}

	Let $startChain$ be a match where $f(startChain)$ is the maximum among all $f(e)$. Consequently, a MIM on $G(\pi)$ will have the cardinality of $f(startChain)$. Such a MIM could be built by tracing the $link$ function in $\mathcal{O}(|\text{MIM}|)$ which is $\mathcal{O}(n)$. Step 3 is implemented in Procedure \texttt{buildMIM} below.

\begin{algorithm}[h]\small
\caption{$\mathtt{buildMIM}(f, link)$}

\SetKwInOut{Description}{Description}
\Description{\textit{Step 3:} Build a MIM based on functions $f$ and $link$ calculated in step 2.}

\ShowLn \tcc{initialize}
$\mathtt{MIM} \leftarrow \emptyset$\;
$startChain \leftarrow a$ (for an arbitrary $a$ where $f(a)$ is maximum)\;
$e \leftarrow startChain$\;
\;
\ShowLn \tcc{build a MIM}
\While{$e \neq \mathtt{NULL}$}{
	$(x, y) \leftarrow e$\;
	$\mathtt{MIM} \leftarrow \mathtt{MIM} \cup xy$\;
	$e \leftarrow link(e)$\;
}
\;
\Return {$\mathtt{MIM}$}\;

\end{algorithm}

\subsubsection{Summary} \label{SSS:permutation summary}

    Given a permutation model $\pi$, a MIM of permutation graph $G(\pi)$ can be found by calling the following procedure \texttt{maxInducedMatching}.

\begin{algorithm}[h]\small
\caption{$\mathtt{maxInducedMatching}(\pi)$}
\label{Algo:maxInducedMatching-Permutation}

\SetKwInOut{Description}{Description}
\Description{Finding a MIM in $G(\pi)$}

$Match \leftarrow \mathtt{buildAllMatches}(\pi)$\;
$(f, link) \leftarrow \mathtt{calculateFAndLink}(\pi^{-1}, Match)$\;
\Return {$\mathtt{buildMIM}(f, link)$}\;
\end{algorithm}

\begin{theorem} \label{T:O(n^2) permutation G(pi)}
	A maximum induced matching in permutation graph $G(\pi)$ can be found in $\mathcal{O}(n^2)$ time.
\end{theorem}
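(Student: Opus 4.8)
The plan is to establish correctness and the running-time bound by analyzing the three procedures that compose \texttt{maxInducedMatching} in the order they are invoked. The overall structure of the argument follows the ``Steps to find a MIM in $G(\pi)$'' box: first the construction of the $Match$ lists, then the dynamic-programming computation of $f$ and $link$, then the reconstruction of an actual maximum induced matching. Correctness of Step 1 is already handled by Lemma~\ref{LM:buildAllMatches correctness}, and its $\mathcal{O}(m+n)$ running time by Lemma~\ref{LM:buildAllMatches running time}; Lemma~\ref{LM:Match(x) elements order} gives the ordering property of the $Match$ lists that makes the linear scan in Step 2 legitimate. So the substance of the proof is Step 2.

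First I would verify the key semantic invariant of the sweep line: immediately after the sweep line $L$ has finished processing coordinate $i = x$ (that is, after both the $f$/$link$ computation for matches with right end $x$ and the cell-update loop over $S_x$), each cell $L_y$ holds the pair $(len, trace)$ where $len$ is the maximum length of a chain whose smallest match $trace = (z,t)$ satisfies $z \ge x$ and $\pi^{-1}(t) = y$ (and $(0,\mathtt{NULL})$ if no such match exists). This is proved by downward induction on $x$, using the two definitions of ``$<$'' between matches and of ``chain.'' The inductive step has two halves. For the formula $f(e) = 1 + \max_{j > \pi^{-1}(y)} L_j.len$ with $e = (y,x)$: a match $e' = (x',y')$ is greater than $e$ exactly when $x' > y = x$'s right end, i.e.\ $x' \ge x+1 > x$, and $\pi^{-1}(y') > \pi^{-1}(y)$; by the induction hypothesis the cells $L_j$ for $j > \pi^{-1}(y)$ at this moment record precisely the best chains starting from such matches $e'$, so prepending $e$ adds exactly $1$, and $link(e)$ picks a witnessing $e'$. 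For the cell-update half: processing $S_x$ inserts, for every match $e = (x,a)$ with its already-computed $f(e)$, the pair into $L_{\pi^{-1}(a)}$ whenever it improves the stored length, which extends the invariant's range of left ends from ``$\ge x+1$'' to ``$\ge x$.'' One should note that matches with right end $x$ are processed \emph{before} matches with left end $x$ are pushed into the cells, which is exactly why no match can be chained to itself and why the ``match'' abstraction (rather than ``edge'') is needed.

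Next I would argue that the linear scan realizing the formula is faithful: when $L$ sits at $x$, Lemma~\ref{LM:Match(x) elements order} says the elements $y \in Match(x)$ come in decreasing order of $\pi^{-1}(y)$, so the threshold $\pi^{-1}(y)$ is non-increasing across successive matches $e = (y,x)$; hence a single pointer $z$ sweeping downward from $n$, together with running values $maxLen = \max_{i>z} L_i.len$ and the corresponding $trace$, suffices to read off $f(e) = 1 + maxLen$ and $link(e) = trace$ for every such $e$ in amortized constant time per match plus $\mathcal{O}(n)$ per coordinate $x$. Summing, Step~2 costs $\sum_x \mathcal{O}(n + |\{e : e \text{ has right end } x\}|) = \mathcal{O}(n^2 + m) = \mathcal{O}(n^2)$, since $m \le \binom{n}{2}$. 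Correctness of Step~3 is then immediate: a chain corresponds to a maximum independent set of $L(G)^2$, the maximum of $f$ over all matches equals the length of a longest chain, and following $link$ from a maximizing match $startChain$ enumerates one such longest chain by the defining property $f(e) = f(link(e)) + 1$; each match becomes the edge $xy$ of $G(\pi)$, and by the remark in Section~\ref{S:Definitions} a longest chain is exactly a maximum induced matching. The loop runs $|\mathrm{MIM}| \le n$ times, so Step~3 is $\mathcal{O}(n)$. Adding the three bounds, $\mathcal{O}(m+n) + \mathcal{O}(n^2) + \mathcal{O}(n) = \mathcal{O}(n^2)$, which proves the theorem.

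The main obstacle I anticipate is pinning down the sweep-line invariant precisely enough that the two update phases at each coordinate compose cleanly — in particular, making the case analysis on which matches $e'$ are ``greater than'' $e$ match the set of cells actually scanned ($j > \pi^{-1}(y)$), and being careful about the processing order (right-end matches queried before left-end matches inserted) so that strict inequalities in the definition of ``$<$'' are respected and degenerate self-chaining is excluded. The running-time half is routine once the amortized analysis of the decreasing pointer $z$ is stated; the only subtlety there is observing that across the whole run of a fixed $x$ the pointer moves monotonically, so its total movement is $\mathcal{O}(n)$ per coordinate rather than per match.
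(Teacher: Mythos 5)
Your proposal is correct and follows essentially the same route as the paper: the paper's proof is just the three-step decomposition (Lemma~\ref{LM:buildAllMatches running time} for Step~1, the sweep-line argument of Section~\ref{SSS:permutation step calculate} with Lemma~\ref{LM:Match(x) elements order} for the $\mathcal{O}(n^2)$ bound on Step~2, and the $\mathcal{O}(n)$ trace-back for Step~3). You merely spell out the sweep-line cell invariant and the amortized analysis of the decreasing pointer $z$, which the paper leaves informal in the surrounding prose rather than inside the proof itself.
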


\begin{proof}
    Procedure \ref{Algo:maxInducedMatching-Permutation} returns a MIM of permutation graph $G(\pi)$. The first function \textit{buildAllMatches} takes $\mathcal{O}(m + n)$ time, proved by Lemma \ref{LM:buildAllMatches running time}. The second function \textit{calculateFAndLink} runs in $\mathcal{O}(n^2)$, as shown in Section \ref{SSS:permutation step calculate}. The last function \textit{buildMIM} takes $\mathcal{O}(n)$ time to run, as shown in Section \ref{SSS:permutation step build}. Therefore, Procedure \ref{Algo:maxInducedMatching-Permutation} a maximum induced matching on $G(\pi)$ in $\mathcal{O}(n^2)$ time.
\end{proof}

\subsection{Faster maximum induced matching algorithms in permutation graphs} \label{SS:O(mlog n + n) and O(mloglog n + n) MIM in permutation graphs}

	With the aid of a segment tree \cite{bentley1977algorithms}, we can build an $\mathcal{O}(m\log n + n)$ algorithm for MIM in permutation graphs from the $\mathcal{O}(n^2)$ algorithm. Unlike the $\mathcal{O}(n^2)$ algorithm in which the sweep line $L$ stores an array of cells, $L$ here stores a segment tree. All operations, including updating a cell and finding the maximum cell within an interval, are done in $\mathcal{O}(\log n)$ time per each operation. As we mentioned, a MIM on permutation graph $G$ can be seen as a maximum independent set on trapezoid graph $L(G)^2$. Actually, this $\mathcal{O}(m\log n + n)$ algorithm is similar to the maximum independent set algorithm for trapezoid graph in \cite{felsner1997trapezoid}, which could be improved to an $\mathcal{O}(m\log\log n + n)$ solution by using vEB tree in \cite{van1977preserving}. A Van Emde Boas tree (or Van Emde Boas priority queue), also called vEB tree, supports searching, inserting and deleting an element in $\mathcal{O}(\log\log M)$, where $M = 2^m$ is a fixed number indicating the maximum number of nodes to be stored in the tree, and the elements are integers in $\{1, 2, ..., M\}$. A vEB tree implements an associative array of $m$-bit integer keys in $\mathcal{O}(M)$ space.
	
	We shall not show this algorithm in detail since our following $\mathcal{O}(m + n)$ algorithm does not use the idea of segment tree nor vEB tree.
	
\subsection{An $\mathcal{O}(m + n)$ maximum induced matching algorithm in permutation graphs} \label{SS:O(m + n) MIM in permutation graphs}
	
	To the best of our knowledge, there has not been existed any data structure that supports both query and update operations in $\mathcal{O}(1)$ applicable for this problem. Therefore, improving an $\mathcal{O}(n^2)$ algorithm in Section \ref{SS:O(n^2) MIM in permutation graphs} into an $\mathcal{O}(m + n)$ algorithm only by applying different data structures in Step 2 is quite an impossible work.
	
	Let pay attention to procedure \texttt{maxInducedMatching}. Although the idea of pointer $z$ is essential to make the running time $\mathcal{O}(n^2)$, it consumes $\mathcal{O}(n)$ calculations for each $x$ from $n$ to $1$, and becomes the most time-consuming part of the whole algorithm. Instead of $\mathcal{O}(n)$, if we can turn it to $\mathcal{O}(|Match(x)|)$ for each $x$, we will acquire an $\mathcal{O}(m + n)$ for the overall algorithm. We show next the most critical points for an $\mathcal{O}(m + n)$ solution.

	Assume that $L_y$ now stores 3 values $(len, trace, swept)$ in place of a pair $(len, trace)$, where $swept$ equals to $1$ if the sweep line $L$ has passed the coordinate $i = \pi(y)$ ($L_y$ is called a \textit{swept cell}) or equal to $0$ ($L_y$ is called an \textit{unswept cell}) otherwise. Some operations of our $\mathcal{O}(n^2)$ algorithm are going to be changed.

\subsubsection{Adjustment of formulae} \label{SSS:Adjustment of formulae}
	
	We call $\varphi(a)$ the greatest number smaller than $a$ having $L_{\varphi(a)}.swept = 0$ (if such $\varphi(a)$ does not exist then we assume $\varphi(a) = 0$). To calculate $f(e)$ and $link(e)$ for all matches $e = (y, x)$ where $y \in Match(x)$, their formulae are also adjusted as below:
	
\begin{framed}
	$\bullet\ f(e) = 1 + \underset{j \geq \pi^{-1}(y),\ L_j.swept = 0}{\max} L_j.len$
	
	$\bullet\ link(e) = \underset{j \geq \pi^{-1}(y),\ L_j.swept = 0,\ f(e) = L_j.len + 1\ (**)}{L_j.trace}$ (for an arbitrary $j$ satisfies (**))
\end{framed}

    With the new formulae, we do not need the pointer $z$ decreasing from $n$ anymore. We just need to iterate through all $L_{\pi^{-1}(z)}$ where $n \geq \pi^{-1}(z) > \pi^{-1}(x)$ and $L_{\pi^{-1}(z)}.swept = 0$ (see Figure \ref{F:O(m + n) permutation G(pi)}a for an example). In this occasion, $L_{\pi^{-1}(z)}.swept = 0$ also means $z < x$, because $L_{\pi^{-1}(z')}.swept$ is set to $1$ for all $z' > x$ when sweep line $L$ stays at coordinate $i = x$. Amazingly, if we have $\pi^{-1}(z) > \pi^{-1}(x)$ and $z < x$, then $(z, x)$ is a match. Therefore, the number of $L_{\pi^{-1}(z)}$ we need to check for each $x$ is exactly $\mathcal{O}(|Match(x)|)$ as we need.

    When going through all elements of the list $S_x$ in the updating process, for each element $(f(e), e)$ where $e = (x, a)$, we will update $L_{\varphi(\pi^{-1}(a))}$ instead of $L_{\pi^{-1}(a)}$. If $L_{\varphi(\pi^{-1}(a))}.len < f(e)$ then we set $L_{\varphi(\pi^{-1}(a))}.len = f(e)$ and $L_{\varphi(\pi^{-1}(a))}.trace = e$ (see Figure \ref{F:O(m + n) permutation G(pi)}b for an example).

	Finally, before moving the sweep line $L$ to the coordinate $i = x - 1$, we set $L_{\pi^{-1}(x)}.swept = 1$, $L_{\pi^{-1}(x)}$ becomes a swept cell. In addition, if $L_{\pi^{-1}(x)}.len > L_{\varphi(\pi^{-1}(x))}.len$, then we set $L_{\varphi(\pi^{-1}(x))} = (L_{\pi^{-1}(x)}.len, L_{\pi^{-1}(x)}.trace, 0)$ (see Figure \ref{F:O(m + n) permutation G(pi)}b for an example).
	
\begin{figure}[h]
\vspace{1mm}
\centering
\includegraphics[width=14cm]{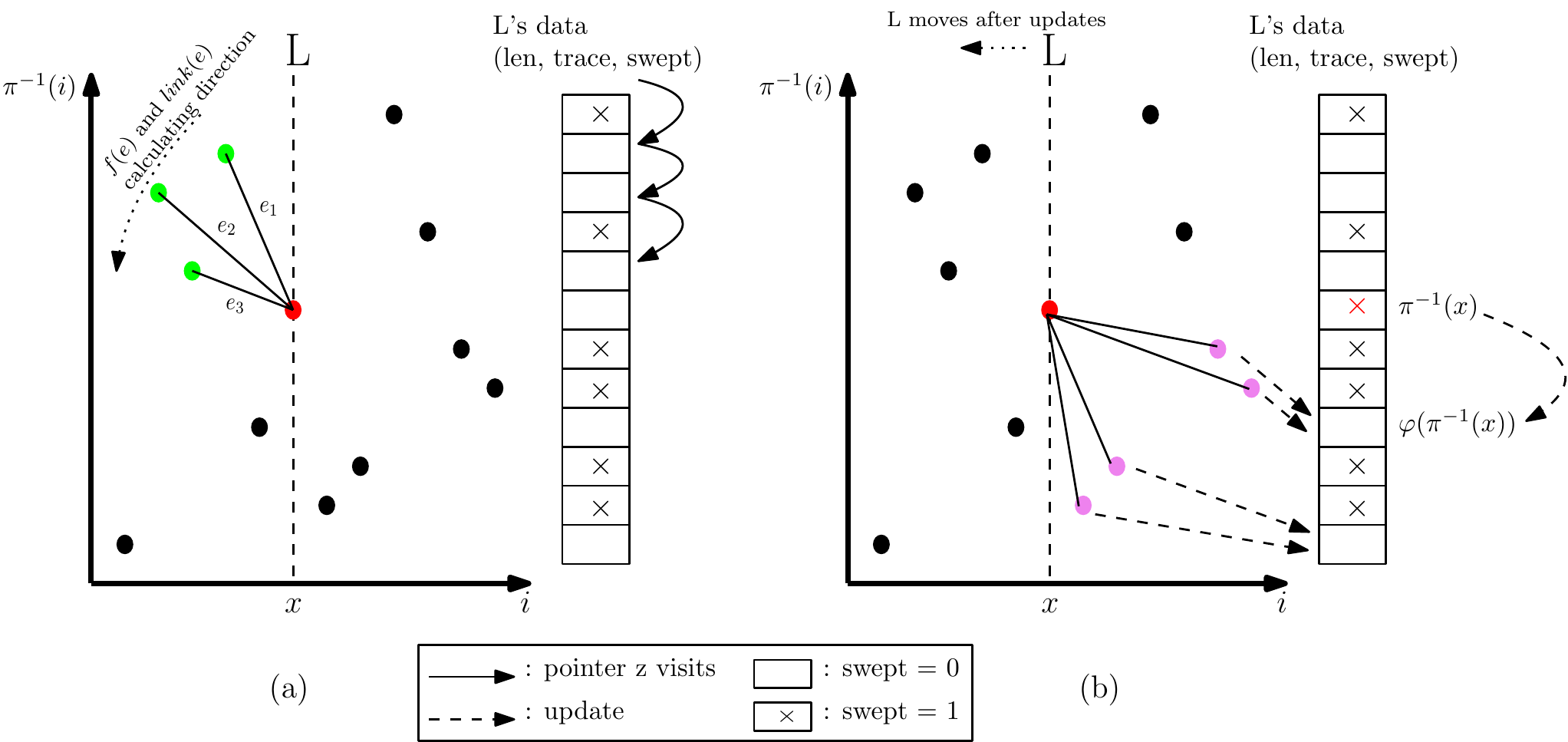}
\caption{Calculate $f$ and $link$ in $G(\pi)$ in $\mathcal{O}(m + n)$ time.\\
(a) When $L$ stays at $i = x$, for all matches $e = (y, x)$, $f(e)$ and $link(e)$ are calculated in decreasing order of $\pi^{-1}(y)$. To obtain $f(e)$ and $link(e)$, every $L_z$ such that $\pi^{-1}(y) \leq z \leq n$ and $L_z.swept = 0$ is visited. Then $(f(e), e)$ is added to the list $S_y$.\\
(b) Every element $(f(e), e)$ having $e = (x, y)$ in $S_x$ is updated to the cell $L_{\varphi(\pi^{-1}(y))}$. Then $L_{\pi^{-1}(x)}.swept$ is set to $1$; $L_{\pi^{-1}(x)}.len$ and $L_{\pi^{-1}(x)}.trace$ are updated to $L_{\varphi(\pi^{-1}(x))}$. After all, the sweep line $L$ moves to the coordinate $i = x - 1$.}
\label{F:O(m + n) permutation G(pi)}\vspace{1mm}
\end{figure}
	
	Let $\psi(y)$ be the smallest number such that $n \geq \psi(y) > y$ and $L_{\psi(y)}.swept = 0$, if such $\psi(y)$ does not exist then we assume $\psi(y) = n$. We will prove the correctness of all these changes above through Lemma \ref{LM:permutation push down lemma}.

\begin{lemma} \label{LM:permutation push down lemma}
	For all $1 \leq x \leq n$, when the sweep line $L$ moves from coordinate $i = x + 1$ to coordinate $i = x$, each cell $L_y\ (1 \leq y \leq n)$ where $L_y.swept = 0$ contains 3 variables $(len, trace, swept)$ that:
\begin{itemize}
\itemsep=0.96pt
	\item $L_y.len$ is the length of a longest chain among all chains having their smallest match $(a, b)$ satisfies $a > x$ and $y < \pi^{-1}(b) \leq \psi(y)$.
	\item $L_y.trace$ is the smallest match of such a chain (if there are multiple choices, $L_y.trace$ can be any match).
	\item $L_y.swept = 0$.
\end{itemize}
\end{lemma}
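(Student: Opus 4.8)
The plan is to prove the statement by induction on $x$ running downward from $x=n$ to $x=1$, tracking the cells $L_1,\dots,L_n$ at the instant the sweep line $L$ passes from coordinate $i=x+1$ to coordinate $i=x$. I would strengthen the inductive claim so that it also records that, by that instant, $f(e)$ and $link(e)$ have been computed correctly for every match $e$ whose right end exceeds $x$ (i.e.\ $f(e)$ is the length of a longest chain whose smallest match is $e$); this extra clause is needed because Step B at coordinate $x+1$ feeds the numbers $f\bigl((x+1,a)\bigr)$ stored in $S_{x+1}$ back into the cells, and it is itself obtained from the cell invariant at the moment the relevant right end is processed, together with the fact that for any unswept index $t$ the intervals $(j,\psi(j)]$ over unswept $j\ge t$ tile $(t,n]$ (which is also why the revised formula for $f$ uses $j\ge\pi^{-1}(y)$ rather than $j>\pi^{-1}(y)$). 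I abbreviate by $\mathrm{OPT}_x(y)$ the length of a longest chain whose smallest match $(a,b)$ satisfies $a>x$ and $y<\pi^{-1}(b)\le\psi(y)$, so the lemma asserts $L_y.len=\mathrm{OPT}_x(y)$ for every unswept $L_y$. Two elementary facts drive the argument: (i) when $L$ reaches $i=x$, a cell $L_j$ is unswept precisely when $\pi(j)\le x$, i.e.\ $j\in\{\pi^{-1}(1),\dots,\pi^{-1}(x)\}$; and (ii) no match $(a,b)$ with $a>x$ has $\pi^{-1}(b)$ equal to an unswept index, since then $b=\pi(\pi^{-1}(b))\le x$ and hence $a<b\le x$. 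Fact (ii) makes the half-open interval $(y,\psi(y)]$ harmless: its unswept right endpoint never carries a relevant match, so the intervals of consecutive unswept cells may formally share that endpoint without double counting.

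For the base case $x=n$ the line has just been initialised, so every $L_y=(0,\mathtt{NULL},0)$ and there is no match $(a,b)$ with $a>n$; hence $\mathrm{OPT}_n(y)=0$ for all $y$ and all three clauses hold trivially. For the inductive step I assume the strengthened statement when $L$ reaches $i=x+1$. Between that instant and the instant $L$ reaches $i=x$, only Steps A, B, C for coordinate $x+1$ run; Step A merely reads cells (and, by the cell invariant at $i=x+1$ plus the tiling fact, computes $f,link$ correctly for the matches with right end $x+1$), so for the cell invariant I only track Steps B and C. Write $w:=\pi^{-1}(x+1)$, the cell Step C marks swept, and let $\varphi',\psi'$ be the auxiliary functions for the unswept set at $i=x+1$, and $\varphi,\psi$ those for $i=x$. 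I then fix an unswept cell $L_y$ at the new instant (so $y\ne w$) and split into three cases by the position of $y$ relative to $w$.

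\emph{Case $y>w$:} Steps B and C touch only indices strictly below $w$ (Step B's targets satisfy $\varphi'(\pi^{-1}(a))<\pi^{-1}(a)<\pi^{-1}(x+1)=w$ because $(x+1,a)$ is a match, and Step C's target is $\varphi'(w)<w$), so $L_y$ is untouched; moreover $\psi(y)=\psi'(y)$ since deleting $w<y$ does not change the first unswept index above $y$, and the only new chains --- those whose smallest match has left end exactly $x+1$ --- have that match's $\pi^{-1}(b)<w<y$, hence lie outside $(y,\psi(y)]$; so the invariant is inherited verbatim. \emph{Case $y<w$ with $\psi'(y)<w$} (some unswept cell lies strictly between $y$ and $w$): then $\psi(y)=\psi'(y)$, Step C does not reach $L_y$, and the matches $(x+1,b)$ with $\pi^{-1}(b)\in(y,\psi'(y)]$ are exactly those whose Step-B target $\varphi'(\pi^{-1}(b))$ equals $y$; the maximum of the IH value of $L_y$ with these $f\bigl((x+1,b)\bigr)$'s --- each a true optimum by the strengthened IH --- is precisely $\mathrm{OPT}_x(y)$. \emph{Case $y<w$ with $\psi'(y)=w$} (so $y=\varphi'(w)$ and every index in $(y,w)$ is swept at $i=x+1$): the merging case. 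Here $\psi(y)=\psi'(w)$, so $L_y$'s interval grows from $(y,w]$ to $(y,w]\cup(w,\psi'(w)]$; Step B dumps into $L_y$ all matches $(x+1,b)$ with $\pi^{-1}(b)\in(y,w)$, and Step C performs $L_y.len\gets\max\bigl(L_y.len,\,L_w.len\bigr)$ with $L_w.len$ still equal to its IH value $\mathrm{OPT}_{x+1}(w)$ (Step B did not touch $L_w$); one checks the resulting value is the optimum over $(y,\psi'(w)]$ with left end $>x$, using fact (ii) to rule out $\pi^{-1}(b)=w$ for a match with left end $\ge x+1$. In every case the $trace$ clause is carried alongside each $len$ update, and $L_y.swept=0$ holds because Step C resets only its own push-down target to $0$ and leaves all other unswept cells alone.

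The hard part will be the merging case: proving that when $L_w$ becomes swept its interval merges cleanly into the next unswept cell below --- no chain lost, none double counted --- while the genuinely new chains (smallest match with left end exactly $x+1$) are routed, through the $\varphi'$-indirection in Step B, into exactly the cell owning the relevant $\pi^{-1}$-coordinate. Closely tied to this is carrying the $f$-correctness clause through the induction (so Step B inserts the right numbers) and disposing of the two boundary conventions without fuss: push-downs into the nonexistent cell $L_0$ when $\varphi=0$, which must be shown harmless because such a match can never serve as a $link$ (a predecessor would have $\pi^{-1}$ of its left end below $w$, forcing its right end above $x+1$, contradicting that it must precede a match with right end at most $x+1$), and $\psi(y)=n$ when $y$ is the topmost unswept cell, which then rightly absorbs everything up to $n$.
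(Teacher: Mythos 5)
Your proof is correct and follows the same downward induction on $x$ that the paper itself uses; the paper's own argument is only a two-line sketch that checks the push-down target $L_{\varphi(\pi^{-1}(x+1))}$ and asserts that the remaining unswept cells follow. Your version supplies exactly what that sketch omits --- the strengthened hypothesis recording correctness of $f$ and $link$ for already-processed right ends, the three-way case split on the position of $y$ relative to $w=\pi^{-1}(x+1)$, the tiling of $(\pi^{-1}(y),n]$ by the intervals of unswept cells, and the two boundary conventions --- and each of these steps checks out.
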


\begin{proof}
    We will prove this lemma using the induction hypothesis as follows.\vspace*{-1mm}
    \begin{itemize}
    \itemsep=0.8pt
        \item The lemma is true for $x = n$ since for all $1 \leq y \leq n$, $L_y = (0, \mathtt{NULL}, 0)$.
        \item Suppose that the lemma is true for some $x = t + 1$ where $1 \leq t < n$, we will prove that the lemma is correct for $x = t$. Indeed, suppose that $\pi^{-1}(t + 1) = k$, we will have $\varphi(\psi(k)) = \varphi(k)$ and $\psi(\varphi(k)) = \psi(k)$ after all update operations are done and before the sweep line $L$ moves from coordinate $i = t + 1$ to coordinate $i = t$. In addition, since we set $L_{\varphi(k)} = (L_{k}.len, L_{k}.trace, 0)$ if $L_{k}.len > L_{\varphi(k)}.len$ before $L$'s movement, the cell $L_{\varphi(k)}$ will satisfy all the three properties mentioned in Lemma \ref{LM:permutation push down lemma}. Therefore, each cell $L_y\ (1 \leq y \leq n)$ where $L_y.swept = 0$ will also satisfy all these three properties.
    \end{itemize}\vspace*{-1mm}
    By the induction hypothesis, we can conclude that the lemma applies for all $1 \leq x \leq n$.
\end{proof}
	
	The last issue is whether there is a data structure, which helps to calculate all the $\varphi$ operations in $\mathcal{O}(m + n)$. Here the disjoint-set data structure does the job.\vspace*{-2mm}
	
\subsubsection{Complexity improvement by the disjoint-set data structure}  \label{SSS:disjoint-set}
	
	We consider a disjoint-set data structure $d$ consists of $n$ sets $\{1\}, \{2\}, ..., \{n\}$, where the \textit{name of the set} containing $i$ is also $i$ at the beginning. Each set in the disjoint-set structure corresponds to either a single unswept cell or a set of consecutive swept cells on the sweep line $L$. The purpose of $d$ is to quickly jump through sets of consecutive cells that are swept, thus lead to quickly calculate the function $\varphi$.
	
\medskip
	We define two operations on our disjoint-set data structure $d$ as below:\vspace*{-1mm}
		\begin{itemize}
\itemsep=0.8pt
	    \item $\mathtt{find}(d, x)$: Return the name of the set containing $x$.
	    \item $\mathtt{union}(d, x, y)$: Create a new set that is the union of the sets containing $x$ and $y$. The name of the new set is the name of the old set containing $x$. This operation assumes that $x$ and $y$ are initially in different sets and destroys the old sets containing $x$ and $y$.
	\end{itemize}\vspace*{-1mm}
	
	In our algorithm, before moving the sweep line $L$ from the coordinate $i = x$ to the coordinate $i = x - 1$, we set $L_{\pi^{-1}(x)}.swept = 1$. After this operation, if $L_{\pi^{-1}(x)-1}.swept = 1$, then the set containing $\pi^{-1}(x)$ will be united with the set containing $\pi^{-1}(x) - 1$ by calling $\mathtt{union}(d, \pi^{-1}(x) - 1, \pi^{-1}(x))$. Similarly if $L_{\pi^{-1}(x)+1}.swept = 1$, we unite two sets where $\pi^{-1}(x)$ and $\pi^{-1}(x) + 1$ belong to by calling $\mathtt{union}(d, \pi^{-1}(x), \pi^{-1}(x) + 1)$.
	
	Since these are the only $\mathtt{union}$ operations in our algorithm, we can see that the name of the set containing $x$ is always the minimum number in that set. Based on this observation, we have a simple way to calculate $\varphi$ as follows. If $L_{x-1}.swept = 0$, then $\varphi(x) = x - 1$. Otherwise, $\varphi(x)$ is exactly $\mathtt{find}(d, x - 1) - 1$. This is correct since $\mathtt{find}(d, x - 1)$ is the smallest number in the set containing $(x - 1)$. In general, the function $\varphi(x)$ is calculated by the procedure $cal\varphi$.
\eject

\begin{algorithm}[H]\small
\caption{$\mathtt{cal\varphi}(L, d, x)$}

\SetKwInOut{Description}{Description}
\Description{Calculate $\varphi(x)$}

\If{$x = 1$ or $L_{x-1}.swept = 0$}{
	\Return {$x - 1$}\;
}
\Else{
	\Return {$\mathtt{find}(d, x - 1) - 1$}\;
}
\end{algorithm}\medskip

    Thanks to the disjoint-set structure in \cite{gabow1985linear}, all $\mathtt{find}$ and $\mathtt{union}$ operations in our algorithm can be calculated in $\mathcal{O}(m + n)$. Because the total number of $union$ and $find$ operations in Procedure is $\Theta(m + n)$, and according to \cite{gabow1985linear}, in our case, the \textit{union tree} $T$ is the tree where node $i$ is the parent of node $(i + 1)$ for all $i: 1 \leq i < n$. Therefore, the $\mathcal{O}(m + n)$ algorithm can be built by a modification of the procedure \texttt{calculateFAndLink}, shown in Procedure \ref{Algo:calculateFAndLink_m+n-Permutation} below. For readability and simplicity, we only show a brief version of Procedure \ref{Algo:calculateFAndLink_m+n-Permutation} here. For the complete version of Procedure \ref{Algo:calculateFAndLink_m+n-Permutation}, please see \textit{Appendix C}.\smallskip

\begin{algorithm}[H]\small
\caption{$\mathtt{calculateFAndLink}(\pi^{-1}, Match)$}
\label{Algo:calculateFAndLink_m+n-Permutation}

\SetKwInOut{Description}{Description}
\Description{\textit{Step 2:} Calculate two functions $f$ and $link$ for all matches $e$.}

\ShowLn \tcc{initialize}
initialize $S$, $L$ and $d$\;
\;
\ShowLn \tcc{calculate functions $f$ and $link$}
\For{$x \leftarrow n$ down to $1$}{
	calculate $f$ and $link$ for all matches $(*, x)$ with pointer $z$ decreasing from $\varphi(n+1)$\;
	update elements in $S_x$ to $L$\;
	update $L$ and $d$ before $L$ passes coordinate $i = x$\;
}
\;
\Return {$(f, link)$} \;
\end{algorithm}

\begin{theorem} \label{T:O(m + n) permutation G(pi)}
	A maximum induced matching in permutation graph $G(\pi)$ can be found in $\mathcal{O}(m + n)$ time.
\end{theorem}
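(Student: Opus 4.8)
The plan is to observe that the algorithm is still the three–step scheme of Procedure~\ref{Algo:maxInducedMatching-Permutation} — run \texttt{buildAllMatches}, then the disjoint–set variant \texttt{calculateFAndLink} of Procedure~\ref{Algo:calculateFAndLink_m+n-Permutation}, then \texttt{buildMIM} — so only the new Step~2 needs to be examined: Step~1 takes $\mathcal{O}(m+n)$ by Lemma~\ref{LM:buildAllMatches running time}, and Step~3 takes $\mathcal{O}(n)$ exactly as in Section~\ref{SSS:permutation step build}. Thus it suffices to prove that the modified Step~2 is correct and runs in $\mathcal{O}(m+n)$.

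First I would settle correctness of the modified Step~2. By Lemma~\ref{LM:permutation push down lemma}, just before the sweep line leaves coordinate $i=x$ every unswept cell $L_y$ stores the length, and a witnessing smallest match, of a longest chain whose smallest match $(a,b)$ satisfies $a>x$ and $y<\pi^{-1}(b)\le\psi(y)$. Hence, when $L$ reaches $i=x$, the unswept cells $L_j$ with $j\ge\pi^{-1}(y)$ jointly cover exactly the chains that can follow a match $e=(y,x)$, so the adjusted formulae compute $f(e)$ and $link(e)$ correctly; here one also uses the observation that $L_{\pi^{-1}(z)}.swept=0$ together with $\pi^{-1}(z)>\pi^{-1}(x)$ forces $z<x$, so $(z,x)$ is itself a match. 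Consequently $\max_e f(e)$ is the size of a MIM of $G(\pi)$, and tracing $link$ from $startChain$ reconstructs one, so \texttt{buildMIM} returns a MIM.

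Then I would bound the running time of Step~2. Initializing $S$, $L$ and $d$ costs $\mathcal{O}(n)$. For a fixed $x$: computing $f$ and $link$ for all matches $(*,x)$ scans precisely the unswept cells $L_{\pi^{-1}(z)}$ with $\pi^{-1}(z)>\pi^{-1}(x)$, which by the observation above are in bijection with the matches $(z,x)$, hence with $Match(x)$, so this scan costs $\mathcal{O}(|Match(x)|)$; pushing the entries of $S_x$ into $L$ touches $\mathcal{O}(|S_x|)=\mathcal{O}(|Match(x)|)$ cells and calls $\mathtt{cal\varphi}$ once per entry; and the final update of $L$ and $d$ before $L$ passes $i=x$ does $\mathcal{O}(1)$ extra work plus at most two \texttt{union} and a constant number of \texttt{find} calls. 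Each $\mathtt{cal\varphi}$ call performs $\mathcal{O}(1)$ work and one \texttt{find}. Summing over all $x$ yields $\sum_x\mathcal{O}(|Match(x)|)=\mathcal{O}(m)$ for cell work, $\mathcal{O}(n)$ for loop overhead, and $\Theta(m+n)$ \texttt{find}/\texttt{union} operations in total.

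The main obstacle is to argue that these $\Theta(m+n)$ disjoint–set operations genuinely run in $\mathcal{O}(m+n)$ rather than in $\mathcal{O}((m+n)\alpha(m+n))$. The key point is that the union pattern is not arbitrary: the only unions ever performed are $\mathtt{union}(d,\pi^{-1}(x)-1,\pi^{-1}(x))$ and $\mathtt{union}(d,\pi^{-1}(x),\pi^{-1}(x)+1)$, so all unions respect the fixed path ``union tree'' $T$ in which node $i$ is the parent of node $i+1$; for such a static (incremental–tree) set–union instance the algorithm of Gabow and Tarjan~\cite{gabow1985linear} executes any sequence of $k$ operations in $\mathcal{O}(k)$ time on a RAM. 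Applying this with $k=\Theta(m+n)$ makes Step~2 run in $\mathcal{O}(m+n)$. Combining the three steps, Procedure~\ref{Algo:maxInducedMatching-Permutation} with the modified Step~2 finds a maximum induced matching of $G(\pi)$ in $\mathcal{O}(m+n)$ time.
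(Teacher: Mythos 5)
Your proposal is correct and follows essentially the same route as the paper: it reduces the theorem to the cost of the modified Step~2, invokes Lemma~\ref{LM:permutation push down lemma} for correctness, uses the observation that the unswept cells scanned at coordinate $x$ are in bijection with $Match(x)$, and appeals to the Gabow--Tarjan linear-time union--find for the static path union tree. The paper's own proof is just a terser summary of these same points, deferring the details to Sections~\ref{SSS:Adjustment of formulae} and~\ref{SSS:disjoint-set}, so your write-up is simply a more self-contained version of the same argument.
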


\begin{proof}
    All $union$ and $find$ operations take $\mathcal{O}(m + n)$ time as shown. Therefore, the cost for Procedure \ref{Algo:calculateFAndLink_m+n-Permutation} is $\mathcal{O}(m + n)$. By substituting Procedure \ref{Algo:calculateFAndLink_m+n-Permutation} for Procedure \ref{Algo:calculateFAndLink-Permutation}, Procedure \ref{Algo:maxInducedMatching-Permutation} can produce a maximum induced matching for $G(\pi)$ in $\mathcal{O}(m + n)$ time.
\end{proof}
	
	McConnell and Spinrad \cite{mcconnell1999modular} introduced an algorithm to construct a permutation model from a permutation graph in linear time. This leads to a linear-time algorithm for the MIM problem in permutation graphs by first generate a permutation model $\pi$ from permutation graph $G$, and then find a MIM in $G(\pi)$ in $\mathcal{O}(m + n)$ time. Based on this result and Theorem \ref{T:O(m + n) permutation G(pi)}, we conclude the section by the following corollary.
	
\begin{corollary}
    A maximum induced matching in a permutation graph $G$ can be found in linear time.
\end{corollary}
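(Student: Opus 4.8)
The plan is to obtain the corollary by composing two linear-time procedures: a graph-recognition step that produces a permutation model, and the algorithm of Theorem~\ref{T:O(m + n) permutation G(pi)} that operates on such a model. Concretely, given a permutation graph $G = (V, E)$ with $|V| = n$ and $|E| = m$, I would first invoke the McConnell--Spinrad recognition algorithm~\cite{mcconnell1999modular}, which in $\mathcal{O}(m + n)$ time either reports that $G$ is not a permutation graph or returns a permutation $\pi$ of $\{1, \dots, n\}$ together with an isomorphism $\phi$ between $G$ and $G(\pi)$. Since $G$ is assumed to be a permutation graph, this step succeeds and yields $\pi$.

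Next I would feed $\pi$ to Procedure~\ref{Algo:maxInducedMatching-Permutation} (with Procedure~\ref{Algo:calculateFAndLink-Permutation} replaced by Procedure~\ref{Algo:calculateFAndLink_m+n-Permutation}), which by Theorem~\ref{T:O(m + n) permutation G(pi)} computes a maximum induced matching $M'$ of $G(\pi)$ in $\mathcal{O}(m + n)$ time; note that $G(\pi) \cong G$ has exactly $m$ edges, so the two size parameters agree. Finally, pulling $M'$ back through the isomorphism $\phi$ gives a set $M = \{\,\phi^{-1}(u)\phi^{-1}(v) : uv \in M'\,\}$ of edges of $G$. Because $\phi$ is a graph isomorphism, it preserves adjacency and non-adjacency, hence preserves the property of being an induced matching as well as its cardinality; therefore $M$ is a maximum induced matching of $G$. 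This relabelling costs $\mathcal{O}(|M|) = \mathcal{O}(n)$ time.

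Adding the three costs gives a total running time of $\mathcal{O}(m + n)$, i.e. linear in the size of $G$, which proves the corollary. There is essentially no hard step here: the only points requiring a word of care are that the output of the recognition algorithm has size $\mathcal{O}(n)$ (so it can be read and passed along within the linear budget), and that the isomorphism returned by~\cite{mcconnell1999modular} must be retained so the final answer can be expressed in terms of the original vertex labels of $G$ rather than those of $G(\pi)$. Both are immediate from the cited results and from the definition of induced matching in Section~\ref{S:Definitions}.
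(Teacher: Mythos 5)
Your proof is correct and follows exactly the paper's own route: apply the linear-time McConnell--Spinrad algorithm to obtain a permutation model $\pi$, then run the $\mathcal{O}(m+n)$ algorithm of Theorem~\ref{T:O(m + n) permutation G(pi)} on $G(\pi)$. The only difference is that you spell out the bookkeeping of the isomorphism and the relabelling of the output, which the paper leaves implicit.
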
 \label{T:O(m + n) permutation G}

\section{Maximum induced matching in trapezoid graphs} \label{S:MIM in trapezoid graphs}

	Trapezoid graphs are a superclass of permutation graphs. Algorithms for solving the MIM problem in trapezoid graphs, which will be proposed later in this paper, are pretty similar to those of permutation graphs. Despite that, our procedures in Section \ref{S:MIM in permutation graphs} still need a huge modification in order to be applicable in trapezoid graphs.

\subsection{An $\mathcal{O}(m + n)$ maximum induced matching algorithm in trapezoid graphs} \label{SS:O(m + n) MIM in trapezoid graphs}

	A trapezoid graph can be represented as rectangular boxes on 2-dimensional \texttt{SPACE}$(\tau)$ in which each trapezoid corresponds to a unique box. A trapezoid which is made by two intervals $[x_1, x_2]$ and $[y_1, y_2]$, where $x_1 \le x_2$ and $y_1 \le y_2$, is described as a unique rectangle having bottom-left corner $(x_1, y_1)$, top-right corner $(x_2, y_2)$ and edges parallel to $x$- and $y$-axis. 
	
	An edge $AB$ of $G(\tau)$ (or equivalently, a vertex of $L(G)^2$) is described as a \textit{big rectangle} whose sides are parallel to the axes and having two opposite corners $(min(A.x_1, B.x_1), min(A.y_1, B.y_1))$ and $(max(A.x_2, B.x_2), $ $max(A.y_2, B.y_2))$ (see Figure \ref{F:trapezoid box representation} for an example). The problem could be viewed as finding a longest sequence of the disjoint big rectangles such that the next rectangle is completely at the top-right of the previous rectangle in the sequence, since such a sequence corresponds to a maximum independent set in $L(G)^2$ and vice versa. We shall show some definitions similar to Section \ref{S:MIM in permutation graphs}.

\begin{figure}[H]
\centering
\includegraphics[width=12.9cm]{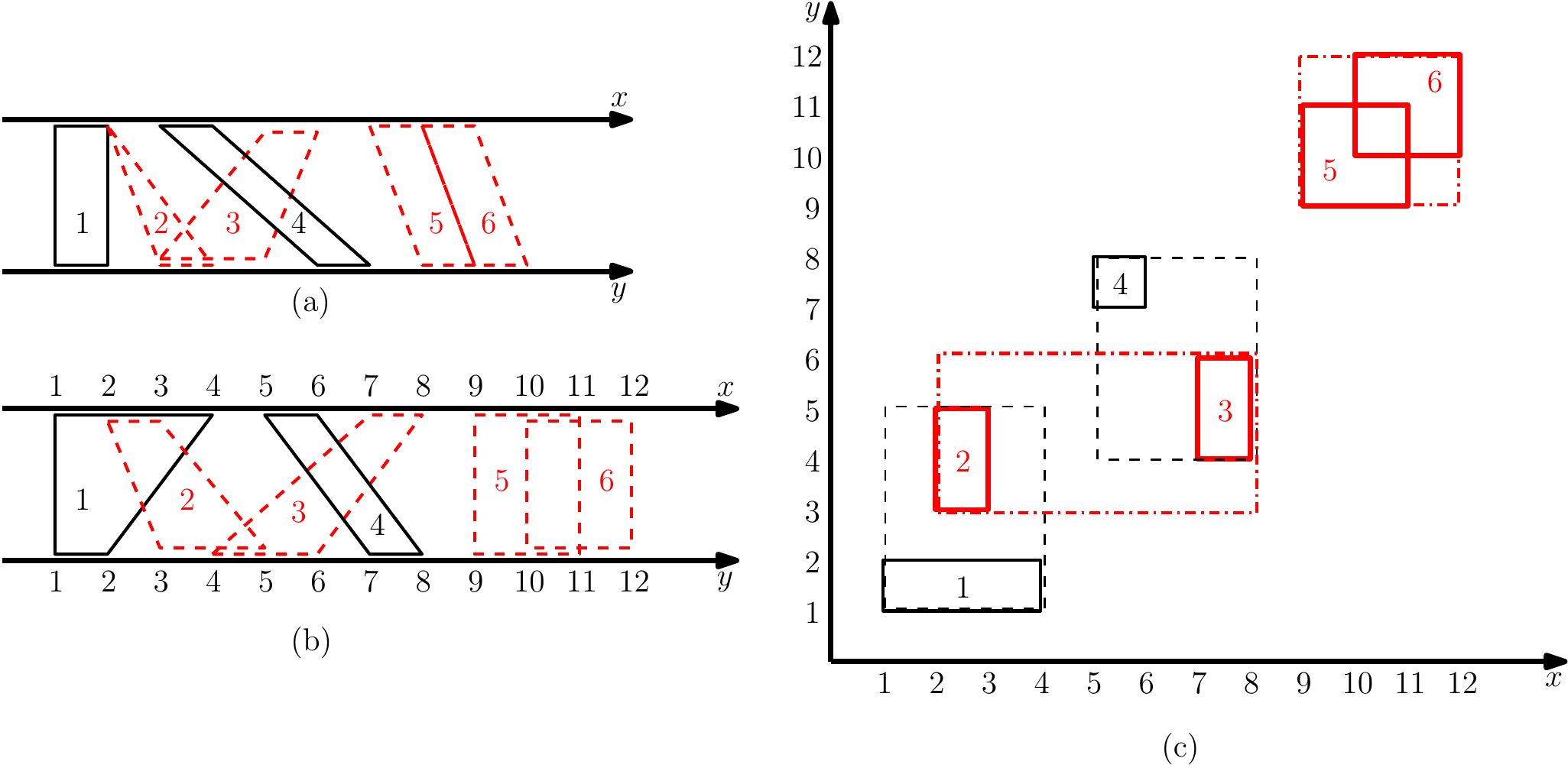}\vspace*{-1mm}
\caption{\rm Trapezoids with duplicate $x$- and $y$-coordinates (a) can be mapped to a new set of trapezoids without duplicate $x$- or $y$-coordinates (b) and can be represented as boxes in 2-dimensional space (c). All $x$- and $y$-coordinates lie inside $[1, 2n]$. There are four matches in this figure: (2, 1), (2, 3), (4, 3) and (5, 6). Match (5, 6) is greater than all other matches. A longest chain of length 2 is (2, 3), (5, 6).}
\label{F:trapezoid box representation}
\end{figure}

\begin{definition}
	An ordered pair of trapezoids $(A, B)$ is called a \textit{match} if $A.x_2 < B.x_2$ and either $A.x_2 \geq B.x_1$ or $A.y_2 \geq B.y_1$. For a match $(A, B)$, $A$ is called the \textit{left end} and $B$ is called the \textit{right end} of the match. The ordered pair $(B, A)$ is called a \textit{reversed match}.
\end{definition}

	One can see that each match $(A, B)$ corresponds to the edge $AB$ of $G(\tau)$, so the number of distinct matches is equal to $m$. In our algorithm, the term \textit{match} is used instead of \textit{edge}.

\begin{definition}
	Given two matches $e\! =\! (A, B)$ and $e' \!=\! (A', B')$, we define $e < e'$ if $\max(A.x_2, B.x_2)$ $< \min(A'.x_1, B'.x_1)$ and $\max(A.y_2, B.y_2) < \min(A'.y_1, B'.y_1)$. We say that $e$ is \textit{smaller} than $e'$ and $e'$ is \textit{greater} than $e$.
\end{definition}

	In other words, $e = (A, B) < e' = (A', B')$ if and only if the big rectangle $C'$ that covers $A'$ and $B'$ is completely at the \textit{top-right} of the big rectangle $C$ that covers $A$ and $B$ in \texttt{SPACE}$(\tau)$.

\begin{definition}
	A sequence of matches $e_1, e_2, ..., e_k$ is called a \textit{chain} if $e_i < e_{i+1}$ for all $1 \leq i < k$. The \textit{length} of the chain is $k$, and the match $e_1$ is the \textit{smallest} match of the chain.
\end{definition}

	Since a chain corresponds to a maximum independent set in $L(G)^2$ and vice versa, the MIM problem in trapezoid graphs turns out to be the problem of finding the longest chain on \texttt{SPACE}$(\tau)$. We introduce an $\mathcal{O}(m + n)$ algorithm for MIM problem in $G(\tau)$ where $n = |V| = |\tau|$ and $m = |E|$ using the same idea but a little bit more tricky than the permutation one. As well as in the permutation case, we first build adjacent lists from pairs of intersecting trapezoids. Then, we calculate $f$ and $link$ functions for all these pairs. Finally, we construct a MIM based on the functions we calculated.
	
\begin{framed}
	\textbf{Steps to find a MIM in $G(\tau)$:}

	1. Construct all matches that exist on \texttt{SPACE}$(\tau)$ and store them in adjacent lists.
	
	2. Calculate two functions $f$ and $link$ for all matches $e$.
	
	3. Build a MIM based on functions $f$ and $link$ calculated in step 2.
\end{framed}

\subsubsection{Construct all matches on \texttt{SPACE}$(\tau)$} \label{SSS:trapezoid step 1}
	
	For each trapezoid $A$, let $Match(A)$ be the list of all trapezoids $B$ so that $(B, A)$ is a match. Trapezoids in $Match(A)$ are sorted in decreasing order of their $y_2$ coordinate. It is noticeable that two trapezoids $A$, $B$ intersect if and only if there is a diagonal of $A$ intersects with a diagonal of $B$. Since $x$- and $y$-coordinates of all trapezoids are distinct as we assumed, all trapezoid diagonals form a permutation graph, so $Match$ lists can be constructed by a nearly similar algorithm written in procedure \texttt{buildAllMatches} from Section \ref{SSS:permutation step construct}. We present procedure \texttt{buildAllMatches} for the trapezoid case as below. For readability and simplicity, we only show a brief version of Procedure \ref{Algo:buildAllMatches-Trapezoid} here. For the complete version of Procedure \ref{Algo:buildAllMatches-Trapezoid}, please see \textit{Appendix D}.
	
\begin{algorithm}[ht]\small
\caption{$\mathtt{buildAllMatches}(\tau)$}
\label{Algo:buildAllMatches-Trapezoid}

\SetKwInOut{Description}{Description}
\Description{\textit{Step 1:} Construct all matches on \texttt{SPACE}$(\tau)$ and store them as adjacent lists.}

\ShowLn \tcc{initialize}

Create linked list \texttt{LL} of $n$ nodes numbered from 1 to $2n$, where $\mathtt{LL}.head = 2n$, and $i.next = i - 1$ for all $1 < i \le 2n$\;
set $revMatch(A) \leftarrow \emptyset$ and $Match(A) \leftarrow \emptyset$ for all $1 \leq x \leq 2n$\;
\;
\ShowLn \tcc{build reversed match lists with possibly duplicate elements}
\For{$y \leftarrow 2n$ down to $1$}{
	$p \leftarrow \mathtt{LL}.head$\;
	suppose that $y = R.y_1$ or $y = R.y_2$ for some $R \in \tau$\;
	\While{$true$}{
		suppose that $p = A.x_1$ or $p = A.x_2$ for some $A \in \tau$\;
		\If{$A = R$}{
			\If{($p = R.x_1$ and $y = R.y_2$) or ($p = R.x_2$ and $y = R.y_1$)}{
				\textbf{break}\;
			}
		}
		\lElseIf{$A.x_2 < R.x_2$}{
			add $R$ to $revMatch(A)$
		}
		\lElse{
			add $A$ to $revMatch(R)$
		}
		$p \leftarrow p.next$\;
	}
	remove node $p$ from $\mathtt{LL}$\;
}
\;
build $Match$ lists from $revMatch$ lists\;
\Return {$\{Match(A)\ |\ A \in \tau\}$}\;
\end{algorithm}

\begin{algorithm}[!b]\small
\caption{$\mathtt{calculateFAndLink}(\tau, Match)$}
\label{Algo:calculateFAndLink-Trapezoid}

\SetKwInOut{Description}{Description}
\Description{\textit{Step 2:} Calculate 2 functions $f$ and $link$ for all matches $e$.}

\ShowLn \tcc{initialize}	
initialize $S$, $L$ and $d$\;
\ShowLn \tcc{calculate functions $f$ and $link$}
\For{$x \leftarrow 2n$ down to $1$}{
	\If{$x = A.x_2\ $ for some $A \in \tau$}{
		calculate $f$ and $link$ for all matches $(*, x)$ with pointer $z$ decreasing from $\varphi(2n+1)$\;
		update the $L$ and $d$ before $L$ passes coordinate $i = x$\;
	}
	\Else(\tcp*[f]{$x = A.x_1$ for some $A \in \tau$}){
		update elements in $S_x$ to $L$\;
	}
}
\Return {$(f, link)$} \;
\end{algorithm}

	As we assumed, $x$-coordinates and $y$-coordinates of all trapezoids are distinct and lie inside the range $[1, 2n]$. Hence we can make an $\mathcal{O}(n)$ pre-process step to find  in $\mathcal{O}(1)$ the trapezoid to which a $x$- or $y$-coordinate belongs. Although a trapezoid $B$ could appear many times in a list $revMatch(A)$, the number of $B$'s appearances is at most 4 since $A$ and $B$ have only two diagonals each. Therefore, the number of elements in all $revMatch$ lists is at most $4m$, which leads to $\mathcal{O}(m + n)$ running time of procedure \texttt{buildAllMatches}.

\subsubsection{Calculate two functions $f$ and $link$} \label{SSS:trapezoid step 2}
	
	The procedure \texttt{calculateFAndLink} (see Procedure \ref{Algo:calculateFAndLink-Trapezoid}) can be reused from Section \ref{SSS:permutation step calculate} by applying some minor adjustments without changing the time complexity (see Figure \ref{F:O(m + n) trapezoid G(tau)}). For readability and simplicity, we only show a brief version of Procedure \ref{Algo:calculateFAndLink-Trapezoid} here. For the complete version of Procedure \ref{Algo:calculateFAndLink-Trapezoid}, please see \textit{Appendix E}.
	
\begin{figure}[!htbp]
\centering
\includegraphics[width=9.6cm]{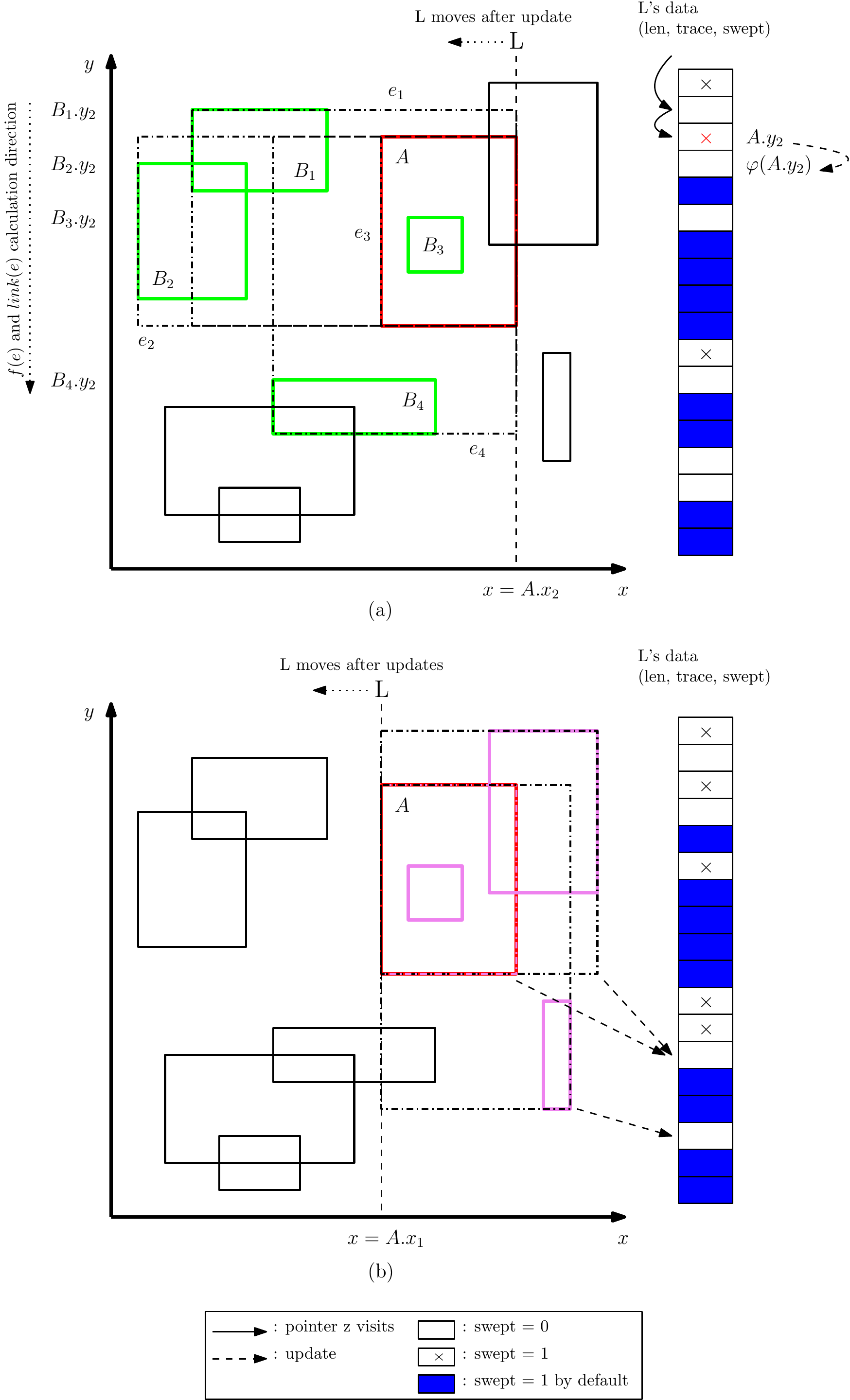}
\caption{Calculate $f$ and $link$ in $G(\tau)$ in $\mathcal{O}(m + n)$ time.\\
(a) For all matches $e = (B, A)$ where $A.x_2 = x$, $f(e)$ and $link(e)$ are calculated in decreasing order of $B.y_2$. To obtain $f(e)$ and $link(e)$, every $L_z$ such that $\max(A.y_2, B.y_2) \leq z \leq 2n$ and $L_z.swept = 0$ is visited. Then $(f(e), e)$ is added to the list $S_{\min(A.x_1, B.x_1)}$. After all, $L_{A.y_2}.swept$ is set to $1$; $L_{A.y_2}.len$ and $L_{A.y_2}.trace$ are updated to $L_{\varphi(A.y_2)}$; and the sweep line $L$ moves to the coordinate $i = x - 1$.\\
(b) Every element $(f(e), e)$ in $S_x$ having $e = (A, B)$ and $A.x_1 = x$ is updated to the cell $L_{\varphi(\min(A.y_1, B.y_1))}$. After all, the sweep line $L$ moves to the coordinate $i = x - 1$.}
\label{F:O(m + n) trapezoid G(tau)}
\end{figure}

\subsubsection{Build a maximum induced matching} \label{SSS:trapezoid step 3}

	The last procedure \texttt{buildMIM} from Section \ref{SSS:permutation step build} can be completely reused.

\subsubsection{Summary} \label{SSS:trapezoid summary}

	All three main procedures have $\mathcal{O}(m + n)$ time complexity. Therefore, given $\tau$, we can find a MIM of trapezoid graph $G(\tau)$ in $\mathcal{O}(m + n)$ time by calling the procedure \texttt{maxInducedMatching} below.\vspace*{5mm}

\begin{algorithm}[H]
\caption{$\mathtt{maxInducedMatching}(\tau)$}
\label{Algo:maxInducedMatching-Trapezoid}

\SetKwInOut{Description}{Description}
\Description{Finding a MIM in $G(\tau)$}

$Match \leftarrow \mathtt{buildAllMatches}(\tau)$\;
$(f, link) \leftarrow \mathtt{calculateFAndLink}(\tau, Match)$\;
\Return {$\mathtt{buildMIM}(f, link)$}\;

\end{algorithm}
	
	\smallskip\medskip
We summarize our approach in Theorem \ref{T:O(m + n) trapezoid G(tau)}.

\begin{theorem} \label{T:O(m + n) trapezoid G(tau)}
	A maximum induced matching in trapezoid graph $G(\tau)$ can be found in $\mathcal{O}(m + n)$~time.
\end{theorem}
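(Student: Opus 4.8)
The plan is to mirror the proof of Theorem~\ref{T:O(m + n) permutation G(pi)}: show that Procedure~\ref{Algo:maxInducedMatching-Trapezoid} returns a maximum induced matching of $G(\tau)$ and that each of its three sub-procedures runs in $\mathcal{O}(m+n)$ time. The correctness part rests on the reduction recorded in Section~\ref{S:Definitions}: a MIM of $G$ is exactly a maximum independent set of $L(G)^2$, and by the big-box representation on $\texttt{SPACE}(\tau)$ such an independent set is in bijection with a longest chain of matches. Hence it suffices to prove that $f$ and $link$ are computed correctly, i.e. that $f(e)$ equals the length of a longest chain whose smallest match is $e$ and that $link$ records a valid predecessor with $f(e)=f(link(e))+1$; then \texttt{buildMIM} simply follows the $link$ pointers from a maximizer of $f$, producing a chain of that length, which is the desired MIM.

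For \texttt{buildAllMatches} I would invoke the argument already sketched in Section~\ref{SSS:trapezoid step 1}. Because all $x$- and $y$-coordinates are distinct integers in $[1,2n]$, the $2n$-node linked-list scan is exactly the permutation-graph construction applied to the set of all trapezoid diagonals, and an $\mathcal{O}(n)$ pre-processing table maps each coordinate to its owning trapezoid in $\mathcal{O}(1)$. A correctness lemma analogous to Lemma~\ref{LM:buildAllMatches correctness} shows every generated pair $(B,A)$ is a match and no match is missed. A fixed trapezoid $B$ can be inserted into $revMatch(A)$ at most four times (two diagonals each), so the $revMatch$ lists have total size at most $4m$, and converting them into the sorted, duplicate-free $Match$ lists is linear; this yields $\mathcal{O}(m+n)$.

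For \texttt{calculateFAndLink} I would state and prove the trapezoid analogue of Lemma~\ref{LM:permutation push down lemma}: an invariant describing, for each unswept cell $L_y$, the longest chain whose smallest match has its covering big rectangle ending at a $y$-coordinate in the window $(y,\psi(y)]$, together with the corresponding $trace$, proved by induction on the decreasing sweep coordinate $x$ from $2n$ down to $1$. The two event types are handled separately: at $x=A.x_2$ we read off $f$ and $link$ for all matches $(B,A)$ by scanning the cells $L_j$ with $\max(A.y_2,B.y_2)\le j\le 2n$ and $L_j.swept=0$, then mark $L_{A.y_2}$ swept and push its value down to $L_{\varphi(A.y_2)}$; at $x=A.x_1$ we flush $S_x$, updating each $L_{\varphi(\min(A.y_1,B.y_1))}$. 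The crucial counting step is that when processing $x=A.x_2$, a cell $L_j$ with $j>\max(A.y_2,B.y_2)$ and $L_j.swept=0$ forces the trapezoid owning $j$ to still have its $x_2$ unreached, hence to form a match with $A$; therefore the cells scanned for a fixed right end $A$ number $\mathcal{O}(|Match(A)|)$, summing to $\mathcal{O}(m)$ overall. All $\varphi$ queries are resolved by the disjoint-set structure of \cite{gabow1985linear} over the union tree in which node $i$ is the parent of node $i+1$; the $\Theta(m+n)$ union/find calls then cost $\mathcal{O}(m+n)$ in total. Finally, \texttt{buildMIM} is reused verbatim and runs in $\mathcal{O}(|\mathrm{MIM}|)=\mathcal{O}(n)$.

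The main obstacle, and the part I expect to demand the most care, is the bookkeeping in the trapezoid analogue of Lemma~\ref{LM:permutation push down lemma}: unlike the permutation case, a match is no longer pinned down by a single point but by a big rectangle with independent $x$- and $y$-extents, and the two-pass treatment of the $x_1$- and $x_2$-events means the invariant must be phrased so that, at the instant the sweep line crosses $A.x_2$, every cell that could contribute to some $f((B,A))$ is both present and unswept, while cells belonging to trapezoids whose $x_2$ has already been passed have been correctly collapsed by the $\varphi$/union machinery. Once that invariant is established, the running-time accounting is the same amortized argument as in the permutation proof, and combining the three $\mathcal{O}(m+n)$ bounds gives the theorem.
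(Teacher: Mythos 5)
Your proposal is correct and follows essentially the same route as the paper, whose own proof of Theorem~\ref{T:O(m + n) trapezoid G(tau)} simply defers to the technique of Theorem~\ref{T:O(m + n) permutation G(pi)} together with the $4m$ bound on the $revMatch$ lists from Section~\ref{SSS:trapezoid step 1} and the $\mathcal{O}(n)$ cost of \texttt{buildMIM}. You actually supply more detail than the paper does --- in particular the explicit counting argument that an unswept cell $L_j$ with $j \ge \max(A.y_2, B.y_2)$ must belong to a trapezoid forming a match with $A$, and the need for a trapezoid analogue of Lemma~\ref{LM:permutation push down lemma} --- both of which the paper leaves implicit.
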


\begin{proof}
    We can use the same technique in the proof of Theorem \ref{T:O(m + n) permutation G(pi)} to show that the overall running time of Procedure \ref{Algo:calculateFAndLink-Trapezoid} is $\mathcal{O}(m + n)$. In addition, since Procedure \ref{Algo:buildAllMatches-Trapezoid} runs in $\mathcal{O}(m + n)$ time and Procedure \texttt{buildMIM} runs in $\mathcal{O}(n)$ time, Procedure \ref{Algo:maxInducedMatching-Trapezoid} can produce a maximum induced matching for $G(\tau)$ in $\mathcal{O}(m + n)$ time.
\end{proof}

    By exploiting Cogis' result \cite{cogis1982ferrers} and matrix multiplication properties, Ma and Spinrad \cite{ma19942, ma1990algorithms, ma1990avoiding} created an algorithm that can recognize whether an undirected graph is a trapezoid graph or not in $\mathcal{O}(n^2)$. This algorithm is by far the fastest trapezoid graph recognition algorithm, and it can be modified to give a trapezoid model in $\mathcal{O}(n^2)$.

    Based on this result and Theorem \ref{T:O(m + n) trapezoid G(tau)}, we conclude the section by the following corollary.

\begin{corollary} \label{T:O(n^2) trapezoid G}
	A maximum induced matching in a trapezoid graph $G$ can be found in $\mathcal{O}(n^2)$ time.
\end{corollary}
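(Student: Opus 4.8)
The plan is to reduce to the model-based result of Theorem~\ref{T:O(m + n) trapezoid G(tau)} by first producing a trapezoid model and then running the $\mathcal{O}(m+n)$ algorithm on it. First I would invoke the recognition-and-construction algorithm of Ma and Spinrad \cite{ma19942, ma1990algorithms, ma1990avoiding}, which, building on Cogis' characterization via Ferrers digraphs \cite{cogis1982ferrers}, decides in $\mathcal{O}(n^2)$ whether the input graph $G$ is a trapezoid graph and, if so, outputs a trapezoid model $\tau$ with $|\tau| = n$ realizing $G$; that is, $G \cong G(\tau)$.

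Next I would feed $\tau$ into Procedure~\ref{Algo:maxInducedMatching-Trapezoid}. By Theorem~\ref{T:O(m + n) trapezoid G(tau)} this returns a maximum induced matching of $G(\tau)$, hence of $G$ under the isomorphism, in $\mathcal{O}(m+n)$ time. Since $G$ is a simple graph on $n$ vertices, $m \le \binom{n}{2} = \mathcal{O}(n^2)$, so $\mathcal{O}(m+n) = \mathcal{O}(n^2)$. Adding the $\mathcal{O}(n^2)$ cost of the model-construction step leaves the total at $\mathcal{O}(n^2)$, which is the claimed bound.

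There is essentially no deep obstacle here: both ingredients are already in place, and the only point requiring a sentence of care is the bookkeeping step of translating the model-based notion of "match" and the output matching back to the original vertex labels of $G$ via the isomorphism returned by the recognition algorithm. One should also remark that if the recognition step reports that $G$ is not a trapezoid graph, the corollary is vacuous for that input; the statement is understood to apply to trapezoid graphs $G$, for which the construction always succeeds.
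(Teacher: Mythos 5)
Your proposal matches the paper's own argument exactly: the authors likewise invoke the Ma--Spinrad $\mathcal{O}(n^2)$ trapezoid-model construction (built on Cogis' result) and then apply Theorem~\ref{T:O(m + n) trapezoid G(tau)}, with $m = \mathcal{O}(n^2)$ absorbing the running time of the second stage. Your added remarks about translating labels through the isomorphism and the vacuity of the non-trapezoid case are harmless elaborations; the proof is correct and essentially identical to the paper's.
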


\section{Conclusion}

	We have introduced efficient algorithms for the MIM problem in both permutation graphs and trapezoid graphs based on the combined technique of dynamic programming and geometrical sweep line. This method is promising to apply to optimization problems in various special graph classes. We finally summarize the main results in Table \ref{tab:Complexity to find a MIM}.

\begin{table}[H]
  \begin{center}
   \caption{Time complexity to find a MIM in particular graph classes.}
    \label{tab:Complexity to find a MIM}\vspace*{-1mm}
    \begin{tabular}{|c|c|c|}
      \toprule %
      \backslashbox{\textbf{Class of $G$}}{\textbf{Input for $G$}} & $\pi$ \textbf{or} $\tau$ & $V$ \textbf{and} $E$\\
      \midrule %
Permutation graph& \multirow{2}{*}{$\mathcal{O}(m + n)$} &$\mathcal{O}(m + n)$\\
\cmidrule{1-1} \cmidrule{3-3}
Trapezoid graph& &$\mathcal{O}(n^2)$\\
\bottomrule
    \end{tabular}
  \end{center}
\end{table}

\subsection*{Acknowledgment}
This research is funded by Vietnam National Foundation for Science and Technology Development (NAFOSTED) under grant number 102.01-2019.302.

Viet Dung Nguyen was funded by Vingroup Joint Stock Company and supported by the Domestic Master/ PhD Scholarship Programme of Vingroup Innovation Foundation (VINIF), Vingroup Big Data Institute (VINBIGDATA), code VINIF.2020.ThS.BK.05. \medskip



\vfil\eject

\section*{Appendix A.}

\vspace*{-10.5cm}

\setcounter{algocf}{0}
\begin{algorithm}
\caption{$\mathtt{buildAllMatches}(\pi)$}

\SetKwInOut{Description}{Description}
\Description{\textit{Step 1:} Construct all matches that exist on \texttt{SPACE}$(\pi^{-1})$ and store them as adjacent lists.}

\ShowLn \tcc{initialize}
Create linked list \texttt{LL} of $n$ nodes numbered from 1 to $n$, where $\mathtt{LL}.head = n$, and $i.next = i - 1$ for all $1 < i \le n$ (the next node of node numbered 1 is \texttt{NULL})\;
\For{$x \leftarrow 1$ to $n$}{
	$Match(x) \leftarrow \emptyset$\;
}
\;
\ShowLn \tcc{build $Match$ lists}
\For{$a \leftarrow n$ down to $1$}{
	$p' \leftarrow \mathtt{NULL}$ \tcp*[r]{previous node}
	$p \leftarrow \mathtt{LL}.head$ \tcp*[r]{current node}
	\While{$p > \pi(a)$}{
		add $\pi(a)$ to $Match(p)$\;
		$p' \leftarrow p$\;
		$p \leftarrow p.next$\;
	}
	\ShowLn \tcc{remove node $p$}
	\If{$p' \neq \mathtt{NULL}$}{
		$p'.next \leftarrow p.next$\;
	}
	\Else{
		$\mathtt{LL}.head \leftarrow p.next$\;
	}
}
\;
\Return {$\{Match(x)\ |\ 1 \leq x \leq n\}$}\;

\end{algorithm}

\eject

\section*{Appendix B.}
\setcounter{algocf}{1}
\begin{algorithm}
\caption{$\mathtt{calculateFAndLink}(\pi^{-1}, Match)$}

\SetKwInOut{Description}{Description}
\Description{\textit{Step 2:} Calculate 2 functions $f$ and $link$ for all matches $e$.}

\ShowLn \tcc{initialize $S$ and $L$}
\For{$x \leftarrow 1$ to $n$}{
	$S_x \leftarrow \emptyset$\;
	$L_x \leftarrow (0, \mathtt{NULL})$\;
}
\;
\ShowLn \tcc{calculate functions $f$ and $link$}
\For{$x \leftarrow n$ down to $1$}{
	\ShowLn \tcc{calculate $f$ and $link$ for all matches $(*, x)$ with pointer $z$ decreasing from $n$}
	$z \leftarrow n$\;
	$maxLen \leftarrow 0$\;
	$trace \leftarrow \mathtt{NULL}$\;
	\For{$y \leftarrow Match(x).begin$ to $Match(x).end$}{
		\While{$z > \pi^{-1}(y)$}{
			\If{$L_z.len > maxLen$}{
				$maxLen \leftarrow L_z.len$\;
				$trace \leftarrow L_z.trace$\;
			}
			$z \leftarrow z - 1$\;
		}
		$e \leftarrow (y, x)$\;
		$f(e) \leftarrow maxLen + 1$\;
		$link(e) \leftarrow trace$\;
		add $(f(e), e)$ to the list $S_y$\;
	}
	\;
	\ShowLn \tcc{update elements in $S_x$ to $L$}
	\For{\textbf{each} $(f(e), e) \in S_x$ where $e = (x, a)$}{
		\If{$L_{\pi^{-1}(a)}.len < f(e)$}{
			$L_{\pi^{-1}(a)} \leftarrow (f(e), e)$\;
		}
	}
}
\;
\Return {$(f, link)$}\;

\end{algorithm}

\clearpage

\setcounter{algocf}{7}
\section*{Appendix C.}

\begin{algorithm}[H]
\caption{$\mathtt{calculateFAndLink}(\pi^{-1}, Match)$}

\SetKwInOut{Description}{Description}
\Description{\textit{Step 2:} Calculate 2 functions $f$ and $link$ for all matches $e$.}

\ShowLn \tcc{initiallize $S$, $L$ and $d$}
$d \leftarrow \{\{1\}, \{2\}, ..., \{n\}\}$\;
\For{$x \leftarrow 1$ to $n$}{
	$S_x \leftarrow \emptyset$,
	$L_x \leftarrow (0, \mathtt{NULL}, 0)$\;
}
\;
\ShowLn \tcc{calculate functions f and link}
\For{$x \leftarrow n$ down to $1$}{
	\ShowLn \tcc{calculate $f$ and $link$ for all matches $(*, x)$ with pointer $z$ decreasing from $\varphi(n+1)$}
	$z \leftarrow \mathtt{cal\varphi}(L, d, n + 1)$,
	$maxLen \leftarrow 0$,
	$trace \leftarrow \mathtt{NULL}$\;
	\For{$y \leftarrow Match(x).begin$ to $Match(x).end$}{
		\While{$z \geq \pi^{-1}(y)$}{
			\If{$L_z.len > maxLen$}{
				$maxLen \leftarrow L_z.len$,
				$trace \leftarrow L_z.trace$\;
			}
			$z \leftarrow \mathtt{cal\varphi}(L, d, z)$\;
		}
		$e \leftarrow (y, x)$\;
		$f(e) \leftarrow maxLen + 1$,
		$link(e) \leftarrow trace$\;
		add $(f(e), e)$ to the list $S_y$\;
	}
	\;
	\ShowLn \tcc{update elements of $S_x$ to $L$}
	\For{\textbf{each} $(f(e), e) \in S_x$ where $e = (x, a)$}{
		$b \leftarrow \mathtt{cal\varphi}(L, d, \pi^{-1}(a))$\;
		\If{$b > 0$ and $L_b.len < f(e)$}{
			$L_b \leftarrow (f(e), e, 0)$\;
		}
	}
	\;
	\ShowLn \tcc{update $L$ and $d$ before $L$ passes coordinate $i = x$}
	$L_{\pi^{-1}(x)}.swept \leftarrow 1$\;
	\If{$\pi^{-1}(x) > 1$ and $L_{\pi^{-1}(x)-1}.swept = 1$}{
		$\mathtt{union}(d, \pi^{-1}(x) - 1, \pi^{-1}(x))$\;
	}
	\If{$\pi^{-1}(x) < n$ and $L_{\pi^{-1}(x)+1}.swept = 1$}{
		$\mathtt{union}(d, \pi^{-1}(x), \pi^{-1}(x) + 1)$\;
	}
	$b \leftarrow \mathtt{cal\varphi}(L, d, \pi^{-1}(x))$\;
	\If{$b > 0$ and $L_b.len < L_{\pi^{-1}(x)}.len$}{
		$L_b \leftarrow (L_{\pi^{-1}(x)}.len, L_{\pi^{-1}(x)}.trace, 0)$\;
	}
}
\;
\Return {$(f, link)$}\;

\end{algorithm}

\setcounter{algocf}{8}
\section*{Appendix D.}

\begin{algorithm}[H]
\caption{$\mathtt{buildAllMatches}(\tau)$}

\SetKwInOut{Description}{Description}
\Description{\textit{Step 1:} Construct all matches on \texttt{SPACE}$(\tau)$ and store them as adjacent lists.}

\ShowLn \tcc{initiallize}
Create linked list \texttt{LL} of $n$ nodes numbered from 1 to $2n$, where $\mathtt{LL}.head = 2n$, and $i.next = i - 1$ for all $1 < i \le 2n$ (the next node of node numbered 1 is \texttt{NULL})\;
\For{\textbf{each} $A \in \tau$}{
	$revMatch(A) \leftarrow \emptyset$ \tcp*[r]{$A$'s reversed match list with duplicable elements}
	$Match(A) \leftarrow \emptyset$ \tcp*[r]{$Match$ list of $A$}
}
\;
\ShowLn \tcc{build reversed match lists with possibly duplicate elements}
\For{$y \leftarrow 2n$ down to $1$}{
	$p' \leftarrow \mathtt{NULL}$ \tcp*[r]{previous node}
	$p \leftarrow \mathtt{LL}.head$ \tcp*[r]{current node}
	suppose that $y = R.y_1$ or $y = R.y_2$ for some $R \in \tau$\;
	\While{$true$}{
		suppose that $p = A.x_1$ or $p = A.x_2$ for some $A \in \tau$\;
		\If{$A = R$}{
			\If{($p = R.x_1$ and $y = R.y_2$) or ($p = R.x_2$ and $y = R.y_1$)}{
				\textbf{break}\;
			}
		}
		\ElseIf{$A.x_2 < R.x_2$}{
			add $R$ to $revMatch(A)$\;
		}
		\Else{
			add $A$ to $revMatch(R)$\;
		}
		$p' \leftarrow p$,
		$p \leftarrow p.next$\;
	}
	\ShowLn \tcc{remove node $p$}
	\If{$p' \neq \mathtt{NULL}$}{
		$p'.next \leftarrow p.next$\;
	}
	\Else{
		$\mathtt{LL}.head \leftarrow p.next$\;
	}
}
\;
\ShowLn \tcc{build $Match$ lists from $revMatch$ lists}
\For{$y \leftarrow 2n$ down to $1$}{
	\If{$y = A.y_2$ for some $A \in \tau$}{
		\For{\textbf{each} $B \in revMatch(A)$}{
			\If{$A$ is not added to $Match(B)$}{
				add $A$ to $Match(B)$\;
			}
		}
	}
}
\;
\Return {$\{Match(A)\ |\ A \in \tau\}$}\;

\end{algorithm}

\setcounter{algocf}{9}
\section*{Appendix E.}
\begin{algorithm}[H]
\caption{$\mathtt{calculateFAndLink}(\tau, Match)$}

\SetKwInOut{Description}{Description}
\Description{\textit{Step 2:} Calculate 2 functions $f$ and $link$ for all matches $e$.}

\ShowLn \tcc{initiallize $S$, $L$ and $d$}
$d \leftarrow \{\{1\}, \{2\}, ..., \{2n\}\}$\;
\For{$x \leftarrow 1$ to $2n$}{
	$S_x \leftarrow \emptyset$\;
	\lIf{$x = A.y_2\ $ for some $A \in \tau$}{
		$L_x \leftarrow (0, \mathtt{NULL}, 0)$
	}
	\Else{
		$L_x \leftarrow (0, \mathtt{NULL}, 1)$\;
		\lIf{$x > 1$ and $L_{x-1}.swept = 1$}{
			$\mathtt{union}(d, x - 1, x)$
		}
	}
}
\;
\ShowLn \tcc{calculate functions $f$ and $link$}
\For{$x \leftarrow 2n$ down to $1$}{
	\If{$x = A.x_2\ $ for some $A \in \tau$}{
		\ShowLn \tcc{calculate $f$ and $link$ for matches $(*, A)$, pointer $z$ decreasing from $\varphi(2n+1)$}
		$z \leftarrow \mathtt{cal\varphi}(L, d, 2n + 1)$, $maxLen \leftarrow 0$, $trace \leftarrow \mathtt{NULL}$\;
		\For{$B \leftarrow Match(A).begin$ to $Match(A).end$}{
			\While{$z \geq \max(A.y_2, B.y_2)$}{
				\lIf{$L_z.len > maxLen$}{
					$maxLen \leftarrow L_z.len$, $trace \leftarrow L_z.trace$
				}
				$z \leftarrow \mathtt{cal\varphi}(L, d, z)$\;
			}
			$e \leftarrow (B, A)$\;
			$f(e) \leftarrow maxLen + 1$, $link(e) \leftarrow trace$\;
			add $(f(e), e)$ to the list $S_{\min(A.x1, B.x1)}$\;
		}
		\;
		\ShowLn \tcc{update $L$ and $d$ before $L$ passes coordinate $i = x$}
		$L_{A.y_2}.swept \leftarrow 1$\;
		\lIf{$A.y_2 > 1$ and $L_{A.y_2-1}.swept = 1$}{
			$\mathtt{union}(d, A.y_2 - 1, A.y_2)$
		}
		\lIf{$A.y_2 < 2n$ and $L_{A.y_2+1}.swept = 1$}{
			$\mathtt{union}(d, A.y_2, A.y_2 + 1)$
		}
		$z \leftarrow \mathtt{cal\varphi}(L, d, A.y_2)$\;
		\lIf{$z > 0$ and $L_z.len < L_{A.y_2}.len$}{
			$L_z \leftarrow (L_{A.y_2}.len, L_{A.y_2}.trace, 0)$
		}
	}
	\Else(\tcp*[f]{$x = A.x_1$ for some $A \in \tau$}){
		\ShowLn \tcc{update elements in $S_x$ to $L$}
		\For{\textbf{each} $(f(e), e) \in S_x$ where $e = (A, B)$}{
			$z \leftarrow \mathtt{cal\varphi}(L, d, \min(A.y1, B.y1))$\;
			\If{$z > 0$ and $L_z.len < f(e)$}{
				$L_z \leftarrow (f(e), e, 0)$\;
			}
		}
	}
}
\;
\Return {$(f, link)$}\;

\end{algorithm}

\end{document}